\newcommand{\dhcre}{Detailed DHC-A\xspace}
\newcommand{\safetabp}{SafeTab-P\xspace}
\theoremstyle{definition}
\newtheorem{definition}{Definition}
\theoremstyle{plain}
\newtheorem{theorem}{Theorem}
\newtheorem{lemma}{Lemma}
\newtheorem{corollary}{Corollary}
\newtheorem{proposition}{Proposition}
\newcommand{\eat}[1]{}
\newcommand{\epsfrac}{\gamma}
\title{SafeTab-P: Disclosure Avoidance for the 2020 Census Detailed Demographic and Housing Characteristics File A (Detailed DHC-A)}
\author[1]{Sam Haney}
\author[1]{Skye Berghel}
\author[1]{Bayard Carlson}
\author[2]{Ryan Cumings-Menon}
\author[1]{Luke Hartman}
\author[1]{Michael Hay}
\author[1]{Ashwin Machanavajjhala}
\author[1]{Gerome Miklau}
\author[1]{Amritha Pai}
\author[1]{Simran Rajpal}
\author[1]{David Pujol}
\author[1]{William Sexton}
\author[1]{Ruchit Shrestha}
\author[1]{Daniel Simmons-Marengo}
\affil[1]{Tumult Labs}
\affil[2]{U.S. Census Bureau}
\date{August 23, 2024}
\begin{document}

\maketitle
\begin{abstract}
This article describes the disclosure avoidance algorithm that the U.S. Census Bureau used to protect the Detailed Demographic and Housing Characteristics File A (Detailed DHC-A)  of the 2020 Census. The tabulations contain statistics (counts) of demographic characteristics of the entire population of the United States, crossed with detailed races and ethnicities at varying levels of geography. The article describes the SafeTab-P algorithm, which is based on adding noise drawn to statistics of interest from a discrete Gaussian distribution. A key innovation in SafeTab-P is the ability to adaptively choose how many statistics and at what granularity to release them, depending on the size of a population group.
We prove that the algorithm satisfies a well-studied variant of differential privacy, called zero-concentrated differential privacy (zCDP). We then describe how the algorithm was implemented on Tumult Analytics and briefly outline the parameterization and tuning of the algorithm. 
\end{abstract}

\newpage
\tableofcontents
\newpage

\section{Introduction}

Every 10 years, the U.S. Census Bureau carries out its constitutional mandate to conduct a census of the U.S. population. The 2020 Census is the latest of such efforts, which aims to completely enumerate every person living in the United States. 
The Herculean task of ``counting everyone once, only once, and in the right place'' for the 2020 Census involves 35 operations (e.g., Address Canvassing, Nonresponse Followup, Redistricting Data Program) \cite{census-operations}. Each of these operations is responsible for a number of systems handling various aspects of the entire census undertaking, ranging from data collection and processing to the dissemination of data products to the American people. The Disclosure Avoidance System (DAS) is one small component of this vast operational ecosystem. The DAS is technically a subsystem of the Decennial Response
Processing System and it belongs to the Data Products and Dissemination (DPD) operation. The purpose of the DAS is to ensure that data shared with the public does not violate the confidentiality of individual respondents to the census. The DAS executes its duties after census responses have been collected and (eventually) processed into a database known as the Census Edited File (CEF). Then, the Decennial Tabulation System and Center for Enterprise Dissemination Services and
Consumer Innovation, both of which are also systems of the DPD operation, finish preparing the data products for public consumption. For the 2020 Census, the DAS was overhauled and rebuilt from the ground up, compared to previous censuses, to modernize its privacy protection mechanisms. The modernization effort is intended to provide individuals with the best possible protection against privacy threats, known or unknown, given today's technological landscape by deploying a DAS redesigned on cutting-edge scientific research. 

DPD releases a variety of different data products for the 2020 Census, including the Census Redistricting Data (P.L. 94-171) Summary File, Demographic and Housing Characteristics File (DHC), Demographic Profile, Detailed Demographic and Housing Characteristics File A (Detailed DHC-A), Detailed DHC-B, and Supplemental DHC (S-DHC). Each of these products are distinct and present their own unique set of challenges with regard to disclosure avoidance. Hence, the DAS deploys a variety of privacy algorithms to optimize protection and accuracy across the various data products. We note that some products share similar enough challenges to utilize the same algorithm. Despite algorithmic differences, all statistical disclosure limitation techniques fit into the same overarching privacy framework known as \emph{differential privacy}. The differential privacy framework calls for the design of algorithms that satisfy mathematically provable guarantees regarding the data publication process. See Section \ref{sec:privacy-defintions} for further details. In this paper, we use "differential privacy" loosely to mean pure differential privacy or one of its several variants. In particular, one should assume, by default, that we are referring to zero-concentrated differential privacy unless otherwise specified. As with pure differential privacy, variants such as zero-concentrated differential privacy are based on mathematically rigorous privacy definitions.

The focus of this paper, SafeTab-P, is one of the several privacy algorithms deployed by the DAS. SafeTab-P is designed specifically to provide differential privacy guarantees for the production and release of the Detailed DHC-A (see Section \ref{sec:problem} for a description of the data product and privacy release problem). 

The main goals of this article are threefold:

\begin{enumerate}
    \item Describe the SafeTab-P algorithm and how it meets the requirements of the Detailed DHC-A.
    \item Prove the privacy properties of the SafeTab-P algorithm.
    \item Describe the parameters in the SafeTab-P algorithm and how they impact privacy-accuracy tradeoffs.
\end{enumerate}

On the first point, we aim to provide a technical pseudocode description of the algorithm that articulates the functional mechanics of how privacy protection is executed on the Census data to produce the Detailed DHC-A tabular summaries (see Section \ref{sec:algorithm-description}). We will also highlight salient differences between our pseudocode abstraction of SafeTab-P and the actual implementation of the algorithm \footnote{ \url{https://github.com/uscensusbureau/DAS_2020_DDHCA_Production_Code} } (see Section \ref{sec:implementation}). The SafeTab-P algorithm is open-sourced to the public and this article may, in some regard, be viewed as a helpful companion guide to the code. It provides meaningful context for those interested in understanding the implemented algorithm. However, this article is not a code specification document, per se. That is, a deep-dive into the code architecture (e.g., module interactions and class descriptions) is out of scope.

On the second point, we will provide relevant background material on the differential privacy framework and explain why SafeTab-P fits into this framework (see Section \ref{sec:safetab-discrete-gaussian-privacy}). Since this paper primarily emphasizes the privacy properties of SafeTab-P, it is not intended as a user guide to the data product. Guidance for users can be found in the Detailed DHC-A technical documentation produced by the Census Bureau \cite{race-spec}. 

Finally, as the third point indicates, we discuss the parameters in SafeTab-P that impact the privacy-accuracy tradeoff of the algorithm. We cover parameter tuning, including a brief look at related data accuracy considerations (see Section \ref{sec:params}). Parameter tuning requires correctly identifying the universe of parameters and understanding the trade-offs inherent in adjusting these parameters. 

\section{Problem \& Desiderata}
\label{sec:problem}
The Detailed DHC-A consists of statistics (counts) of demographic characteristics for all persons in the United States and Puerto Rico crossed with detailed races and ethnicities at varying levels of geography. The demographic characteristics include population counts and sex by age statistics for 2996 race and ethnicity characteristic iterations (defined in Section \ref{sec:race-defs}). Despite the name, the Detailed Demographic and \emph{Housing} Characteristic File A product does not provide any data on households. Household data for detailed race and ethnicity groups and American Indian and Alaska Native tribes and villages is available in the Detailed DHC-B. The Census Bureau separated these products ``to better facilitate developing disclosure protections for these complex data" \cite{race-spec}. In this section, we define relevant concepts, outline the precise statistics to be released, and then formulate the differentially private algorithm design problem. 

\subsection{Geography}
Every person resides in exactly one Census block that determines their geographic location. Census blocks are the most granular form of \textit{geographic entities}. All other geographic entities (e.g., L.A. county, the state of CA, and the United States) are aggregations of Census blocks. Geographic entities are divided into \emph{geographic summary levels}. A geographic summary level is a set of nonoverlapping geographic entities, such as the set of all states, or the set of all counties. Detailed DHC-A produces statistics for the following geographic summary levels:

\begin{itemize}
    \item Nation
    \item State or State equivalent
    \item County or County equivalent
    \item Census Tract
    \item Place
    \item American Indian, Alaska Native, and Native Hawaiian (AIANNH) areas.
\end{itemize}

Henceforth, we tend to write State (County) without including the ``or State (County) equivalent'' qualifier although one should assume the qualifier when applicable. Washington, D.C. is an example of a State equivalent.

\subsection{Race and Ethnicity}
\label{sec:race-defs}
The U.S. Census Bureau collects detailed race and ethnicity data from individuals in accordance with the 1997 Federal Register Notice ``Revisions to the Standards for the Classification of Federal Data on Race and Ethnicity'' released by the Office of Management and Budget (OMB). 

Every person is associated with one or more \textit{race codes} and, per OMB guidelines, a single \textit{ethnicity code}. The maximum number of race codes, called the \textit{race multiplicity}, that a person can be associated with is limited to 8 by the census data collection procedures. A \textit{race group} is a set of race codes. Similarly, an ethnicity group is a set of ethnicity codes.

\textit{Detailed} race or ethnicity groups are the most disaggregated racial or ethnic group classifications for which the Census Bureau publishes data.  Examples of Detailed racial or ethnic groups include Basque, Dutch, Guatemalan, Puerto Rican, Ethiopian, Nigerian, Mongolian, Thai, Apache, and Navajo Nation. The OMB-specified \textit{Major} race categories are aggregated race groupings that represent the minimum allowable race classification system for which census data may be published. Major race grouping include White, Black or African American, American Indian or Alaska Native, Asian, Native Hawaiian or Other Pacific Islander, and Some Other Race. The aggregated ethnic equivalent of the Major race groups is a coarse binary classification (Hispanic or Latino, Not Hispanic or Latino). \textit{Regional} race or ethnicity groups provide an intermediate level of aggregation between Detailed and Major. Examples of Regional racial or ethnic groups include European, Central American, Caribbean, Sub-Saharan African, Alaska Native, and American Indian. Note that, while Regional groups are often distinguished by geographic location (Europe, Sub-Saharan Africa, Caribbean, etc.), the Regional concept pertains to a classification of race and ethnicity; it does not pertain to geography. Subject-matter experts at the Census Bureau determine which race or ethnicity groups are Detailed and which groups are Regional. For the purposes of SafeTab-P, we take these classifications as given exogenous factors. The universe specification of valid race and ethnicity groups as well as the classification into Detailed or Regional groups began well before data collection for the 2020 Census \cite{race-blog}. Appendix G of \cite{race-spec} provides a complete enumeration of the detailed race and ethnicity groups. 

An individual person is in a race group \textit{Alone} if all race codes associated with that individual are contained in the race group. For example, if an individual self-identifies with the single race code for Navajo Nation and no other race codes, said individual would belong to the Detailed race group Navajo Nation Alone. Alternatively, an individual may report multiple race codes (e.g., British, Scottish, and Dutch) that aggregate into the same Regional group (e.g., European Alone). A person is in a race group \textit{Alone or in Any Combination} if some race code associated with that person is contained in the race group. This concept pertains to individuals that self-identify with a single race (e.g., British) or with multiple races (e.g., British and Thai). In both examples, the individual belongs to the Detailed British race group Alone or in Any Combination. The individual also belongs to the Regional European race group Alone or in Any Combination. Since all individuals are only associated with a single ethnicity code, all ethnicity groups are Alone groups.

A \textit{characteristic iteration} is the combination of a race (or ethnicity) group, along with the specification of either Alone or Alone or in Any Combination (e.g., Latin American Indian Alone or in Any Combination is a characteristic iteration). One person may be associated with multiple characteristic iterations. Like geographical entities, characteristic iterations are also divided into \emph{characteristic iteration levels}. We have already seen the defining aspect of these iteration levels: namely, the concepts of \emph{Detailed} and \emph{Regional} race groups. The Detailed characteristic iteration level consists of the set of characteristic iterations for all Detailed race and ethnicity groups Alone as well as for all Detailed race groups Alone or in Any Combination (e.g., Chinese Alone, Chinese Alone or in Any Combination, Celtic Alone, and Celtic Alone or in Any Combination). The Regional characteristic iteration level consists of the set of characteristic iterations for all Regional race and ethnicity groups Alone as well as for all Regional race groups Alone or in Any Combination (e.g., Middle Eastern or North African Alone, Middle Eastern or North African Alone or in Any Combination, Polynesian Alone, and Polynesian Alone or in Any Combination). We intentionally omit the notion of a Major race and ethnicity characteristic iteration level, as no statistics for this level will be produced by the SafeTab-P algorithm for the Detailed DHC-A.

\subsection{Population Groups}\label{sec:population-group-levels}

A \textit{population group} is a pair $(g, c)$, where $g$ is a geographic entity (e.g., the state of NC or L.A. County) and $c$ is a race or ethnicity characteristic iteration (e.g., Latin American Indian Alone or in Any Combination). Population groups are divided into \textit{population group levels}. We will often identify a population group level by specifying a (geography level, characteristic iteration level) pair. However, each population group level is really a set of population groups, where each population group's geographic entity belongs to the specified geography level and its characteristic iteration belongs to the specified characteristic iteration level. More formally, the Detailed DHC-A requires the publication of statistics for the following population group levels:

\begin{itemize}
    \item (Nation, Detailed) $\equiv$ $\{(g, c): g \text{ is the Nation}, c \text{ is a Detailed characteristic iteration}\}$
    \item (State, Detailed) $\equiv$ $\{(g, c): g \text{ is a State}, c \text{ is a Detailed characteristic iteration}\}$
    \item (County, Detailed) $\equiv$ $\{(g, c): g \text{ is a County)}, c \text{ is a Detailed characteristic iteration}\}$
    \item (Tract, Detailed) $\equiv$ $\{(g, c): g \text{ is a Census Tract}, c \text{ is a Detailed characteristic iteration}\}$
    \item (Place, Detailed) $\equiv$ $\{(g, c): g \text{ is a Place}, c \text{ is a Detailed characteristic iteration}\}$
    \item (AIANNH, Detailed) $\equiv$ $\{(g, c): g \text{ is an AIANNH area}, c \text{ is a Detailed characteristic iteration}\}$
    \item (Nation, Regional) $\equiv$ $\{(g, c): g \text{ is the Nation}, c \text{ is a Regional characteristic iteration}\}$
    \item (State, Regional) $\equiv$ $\{(g, c): g \text{ is a State}, c \text{ is a Regional characteristic iteration}\}$
    \item (County, Regional) $\equiv$ $\{(g, c): g \text{ is a County}, c \text{ is a Regional characteristic iteration}\}$
    \item (Tract, Regional) $\equiv$ $\{(g, c): g \text{ is a Census Tract}, c \text{ is a Regional characteristic iteration}\}$
    \item (Place, Regional) $\equiv$ $\{(g, c): g \text{ is a Place}, c \text{ is a Regional characteristic iteration}\}$
\end{itemize}

In practice, some Detailed or Regional characteristic iterations may be omitted from the tabulations in a geography level. In other words, the above population group levels are subsets of the designated sets. This is done in accordance with specifications for the Detailed DHC-A provided by the Census Bureau. There is no concise representation for the exact level sets. The population group level (AIANNH, Regional) is intentionally omitted from the Detailed DHC-A.

One person may belong to multiple population groups in the set that comprises a population group level. For example, an individual who resides in Texas and reports Kenyan and Ghanaian races would belong to the (Texas, Kenyan Alone or in Any Combination) and the (Texas, Ghanaian Alone or in Any Combination) population groups which are both contained in the population group level identified by (State, Detailed). A person who resides in Schuyler County, NY and only reports a single race of Dutch would still belong to the (Schuyler County, NY, Dutch Alone) and the (Schuyler County, NY, Dutch Alone or in Any Combination) population groups which are both contained in the level identified by (County, Detailed). One person may be connected with population groups across multiple population group levels. For example, the Dutch individual residing in Schuyler County, NY would additionally belong to the (NY, Dutch Alone) and (NY, Dutch Alone or in Any Combination) population groups in the (State, Detailed) level, the (Schuyler County, NY, European Alone) and (Schuyler County, NY, European Alone or in Any Combination) population groups in the (County, Regional) level, etc. It is possible for an individual to not belong to any population groups in a particular level. Specifically, because AIANNH areas do not cover the United States, an individual who resides outside all designated AIANNH areas does not belong to any population groups in the (AIANNH, Detailed) level.  

Because the geographic entities in a geography level are disjoint (a person cannot reside in both Schuyler County, NY and Fairfax County, VA), an individual's characteristic iterations primarily determine the number of population groups the individual belongs to in each level. A person associated with the maximum of 8 race codes and a single ethnicity code could belong to at most 9 Detailed characteristic iterations (8 Alone or in Any Combination race groups and 1 Alone ethnic group) and, similarly, at most 9 Regional characteristic iterations. Thus, for any given population group level, the maximum number of population groups an individual may contribute to is 9. 

\subsection{Detailed Demographic and Housing Characteristics File A}
We are now prepared to define the \dhcre. The product aims to tabulate statistics by population groups.  

The following statistical tables are released for each eligible population group as part of Detailed DHC-A.\footnote{Population groups may be deemed ineligible to receive certain statistics for a variety of reasons discussed throughout the document, such as groups pre-specified to only receive T01001 counts and groups that may fail to meet certain population thresholds).}   
\begin{itemize}
    \item A total population table (T01001) associated with each population group. See Table \ref{tab:t1-shell}.
    \item Sex by age counts for a subset of population groups. The sex by age tables come in three different variants (T02001, T02002, T02003) for reasons explained later. See Tables \ref{tab:t2-age4-shell}, \ref{tab:t2-age9-shell}, \ref{tab:t2-age23-shell}.
\end{itemize}

\begin{table}[H]
\small
\begin{tabular}{l}
\textbf{T01001 Total Population}\\
\textit{Universe: Total Population}\\
Total\\
\end{tabular}
\caption{\label{tab:t1-shell} The T01001 table contains counts of persons for all eligible population groups.}
\end{table}

\begin{table}[H]
\small
\begin{tabular}{l}
\textbf{T02001 Sex by Age(4)}\\
\textit{Universe: Total Population}\\
Total\\
\quad \quad Male \\
\quad \quad \quad \quad Under 18 years \\
\quad \quad \quad \quad 18 to 44 years \\
\quad \quad \quad \quad 45 to 64 years \\
\quad \quad \quad \quad 65 years and over \\
\quad \quad Female \\
\quad \quad \quad \quad Under 18 years \\
\quad \quad \quad \quad 18 to 44 years \\
\quad \quad \quad \quad 45 to 64 years \\
\quad \quad \quad \quad 65 years and over \\
\end{tabular}
\caption{\label{tab:t2-age4-shell} The T02001 Sex by Age (4) table contains counts of persons by sex and age in 4 bins for each eligible population group.}
\end{table}

\begin{table}[H]
\small
\begin{tabular}{l}
\textbf{T02002 Sex by Age(9)}\\
\textit{Universe: Total Population}\\
Total\\
\quad \quad Male \\
\quad \quad \quad \quad Under 5 years \\
\quad \quad \quad \quad 5 to 17 years \\
\quad \quad \quad \quad 18 to 24 years \\
\quad \quad \quad \quad 25 to 34 years \\
\quad \quad \quad \quad 35 to 44 years \\
\quad \quad \quad \quad 45 to 54 years \\
\quad \quad \quad \quad 55 to 64 years \\
\quad \quad \quad \quad 65 to 74 years \\
\quad \quad \quad \quad 75 years and over \\
\quad \quad Female \\
\quad \quad \quad \quad Under 5 years \\
\quad \quad \quad \quad 5 to 17 years \\
\quad \quad \quad \quad 18 to 24 years \\
\quad \quad \quad \quad 25 to 34 years \\
\quad \quad \quad \quad 35 to 44 years \\
\quad \quad \quad \quad 45 to 54 years \\
\quad \quad \quad \quad 55 to 64 years \\
\quad \quad \quad \quad 65 to 74 years \\
\quad \quad \quad \quad 75 years and over \\
\end{tabular}
\caption{\label{tab:t2-age9-shell} The T02002 Sex by Age (9) table contains counts of persons by sex and age in 9 bins for each eligible population group.}
\end{table}

\begin{table}[H]
\small
\begin{tabular}{l}
\textbf{T02003 Sex by Age(23)}\\
\textit{Universe: Total Population}\\
Total\\
\quad \quad Male \\
\quad \quad \quad \quad Under 5 years \\
\quad \quad \quad \quad 5 to 9 years \\
\quad \quad \quad \quad 10 to 14 years \\
\quad \quad \quad \quad 15 to 17 years \\
\quad \quad \quad \quad 18 and 19 years \\
\quad \quad \quad \quad 20 years \\
\quad \quad \quad \quad 21 years \\
\quad \quad \quad \quad 22 to 24 years \\
\quad \quad \quad \quad 25 to 29 years \\
\quad \quad \quad \quad 30 to 34 years \\
\quad \quad \quad \quad 35 to 39 years \\
\quad \quad \quad \quad 40 to 44 years \\
\quad \quad \quad \quad 45 to 49 years \\
\quad \quad \quad \quad 50 to 54 years \\
\quad \quad \quad \quad 55 to 59 years \\
\quad \quad \quad \quad 60 to 61 years \\
\quad \quad \quad \quad 62 to 64 years \\
\quad \quad \quad \quad 65 and 66 years \\
\quad \quad \quad \quad 67 to 69 years \\
\quad \quad \quad \quad 70 to 74 years \\
\quad \quad \quad \quad 75 to 79 years \\
\quad \quad \quad \quad 80 to 84 years \\
\quad \quad \quad \quad 85 years and over \\
\quad \quad Female \\
\quad \quad \quad \quad Under 5 years \\
\quad \quad \quad \quad 5 to 9 years \\
\quad \quad \quad \quad 10 to 14 years \\
\quad \quad \quad \quad 15 to 17 years \\
\quad \quad \quad \quad 18 and 19 years \\
\quad \quad \quad \quad 20 years \\
\quad \quad \quad \quad 21 years \\
\quad \quad \quad \quad 22 to 24 years \\
\quad \quad \quad \quad 25 to 29 years \\
\quad \quad \quad \quad 30 to 34 years \\
\quad \quad \quad \quad 35 to 39 years \\
\quad \quad \quad \quad 40 to 44 years \\
\quad \quad \quad \quad 45 to 49 years \\
\quad \quad \quad \quad 50 to 54 years \\
\quad \quad \quad \quad 55 to 59 years \\
\quad \quad \quad \quad 60 to 61 years \\
\quad \quad \quad \quad 62 to 64 years \\
\quad \quad \quad \quad 65 and 66 years \\
\quad \quad \quad \quad 67 to 69 years \\
\quad \quad \quad \quad 70 to 74 years \\
\quad \quad \quad \quad 75 to 79 years \\
\quad \quad \quad \quad 80 to 84 years \\
\quad \quad \quad \quad 85 years and over \\
\end{tabular}
\caption{\label{tab:t2-age23-shell} The T02003 Sex by Age(23) table contains counts of persons by sex and age in 23 bins for each eligible population group.}
\end{table}

Some population groups are pre-determined to only receive T01001 statistics. These groups will be called \emph{TotalOnly} population groups throughout the document. The TotalOnly population groups are only tabulated at the Nation and State geography levels. That is, TotalOnly is a subset of the set union of (Nation, Detailed), (Nation, Regional), (State, Detailed), (State, Regional). The categorization of population groups as TotalOnly is exogenous to the SafeTab-P program. As with other population group categorizations, this determination is made by subject-matter experts outside the DAS. The Detailed DHC-A technical documentation explains, ``detailed groups with a national population less than 50 in the 2010 Census were preset to only receive nation and state level total population counts'' \cite{race-spec}. Importantly, the classification was not based on collected 2020 Census responses and thus, does not impact the confidentiality protections afforded by the SafeTab-P program.

\subsection{Private Release Problem}\label{sec:privacy-problem}
The release of statistical data products by the U.S. Census Bureau about persons and households is regulated under Title 13 and any release of statistics about persons in the United States must be afforded strong privacy protections \cite{title13}. Moreover, it has been demonstrated that legacy statistical disclosure limitation (SDL) techniques are vulnerable to attacks that can reconstruct the sensitive person records from aggregate statistics \cite{ap-census-attack}. Hence, the U.S. Census Bureau decided to release many of the 2020 Census data products, including \dhcre, using algorithms that satisfy modern privacy definitions like differential privacy \cite{dsep-dp}.

In particular, we describe a disclosure avoidance technique for the Detailed DHC-A that was designed to satisfy the following desiderata:  

\begin{itemize}
    \item \textit{Privacy:} The algorithm must ensure end-to-end zero-concentrated differential privacy with respect to (the addition/removal of) every person in the United States and Puerto Rico.
    \item \textit{Population Groups:} The algorithm must release statistics for a predefined set of race and ethnicity characteristic iterations and the following geographies: nation, states, counties, Census tracts, places, and all areas designated as AIANNH areas.  
    \item \textit{Adaptivity:} The algorithm may adaptively choose the granularity at which sex by age statistics are released. For instance, for population groups with a few people, the sex by age histogram may only have 4 buckets of age (as seen in Table \ref{tab:t2-age4-shell}, while for population groups with many people a more detailed histogram may be released (such as the 23 age bins shown in Table \ref{tab:t2-age23-shell}). 
    \item \textit{Accuracy:} The algorithm must achieve pre-specified accuracy levels for population groups in terms of the margin of error (MOE) in output counts. Different population groups may have different MOEs specified (described later in the paper in Table~\ref{tab:moe-targets}). The MOE discussed in the paper captures error induced by disclosure avoidance induced and does not capture other sources of error such as under/over counting in the census. 
    \item \textit{Integrality:} The output statistics must be integers. 
    \item \textit{Minimal Consistency:} The algorithm is not required, in general, to ensure consistency. That is, different counts output by the system need not be consistent with each other (e.g., the number of people of a certain characteristic iteration in the United States need not equal the sum of the population counts for the same characteristic iteration across all states). We also note that no consistent is enforced with other data products such as the DHC. However, some postprocessing of outputs is done to address specific demographic reasonableness concerns. These postprocessing steps are discussed in Section \ref{sec:postprocessing}.
\end{itemize}
In the rest of the paper, we describe the SafeTab-P differential privacy algorithm, discuss implementation and parameter tuning, and analyze bounds on the privacy loss achievable while satisfying the constraints mentioned above.   

\section{SafeTab-P Algorithm}
\label{sec:algorithm-description}
\safetabp is a privacy algorithm for releasing detailed race and ethnicity statistics from the 2020 Census. The algorithm must accommodate the release of tabulations for total counts by detailed race and ethnicity and tabulations for sex by age counts by detailed race and ethnicity for various geographic summary levels. The algorithm acts on a private dataframe derived from the 2020 Census. The algorithm does not re-use noisy estimates or privacy-loss budget from other 2020 data products such as the DHC. In this section, we will describe the algorithm as applied to the United States. Puerto Rico is discussed in Section \ref{sec:pr}.

\subsection{Input Data Description}
The 2020 CEF is a relational database consisting of multiple person and household attributes spread across several linked dataframes. Many of these attributes are irrelevant to the tabulations in the Detailed DHC-A. As such, we assume a simplified, reduced-form data representation that is sufficient for our purposes. That is, we imagine deriving a single private dataframe from the 2020 CEF that consists of a row for each person in the United States with the following attributes: BlockID, RaceEth, Sex, Age.

\vspace{\baselineskip}

\noindent \textbf{BlockID} is a single attribute that geolocates a person record to a unique Census block. As previously discussed, all geographic entities are aggregations of blocks. Thus, we assume that BlockID implicitly encodes each record's unique tract, county, and state. BlockID also encodes whether a record belongs to an AIANNH area and, if so, uniquely identifies the area. We note that all records are vacuously included in the Nation geographic entity.

\vspace{\baselineskip}

\noindent \textbf{RaceEth} is a single attribute that encodes up to eight race codes and an ethnicity code. That is, one person's RaceEth attribute may indicate the person is Andorran and Dominican while another person's RaceEth attribute indicates Chinese, Japanese, Tongan, and Not Hispanic or Latino. We assume this RaceEth conceptualization combined with Census Bureau specifications fully determines the characteristic iterations of an individual.

\vspace{\baselineskip}

\noindent \textbf{Sex} is recorded as Male or Female in the 2020 Census.

\vspace{\baselineskip}

\noindent \textbf{Age} is recorded in single-year increments (e.g., 12, 30, 82). 

\vspace{\baselineskip}

\subsection{The Algorithm Description}

SafeTab-P must produce tabulations for population groups. As previously defined, a population group is a geographic entity (e.g. a specific county) and a characteristic iteration code (see Section~\ref{sec:problem} for more details). Population groups are split into sets called population group levels (specified by a geography level and an iteration level) with distinct privacy-loss budgets.  Records are associated with the population groups of a population group level via transformations that map their BlockID to geographic entities and their RaceEth attribute to characteristic iterations. In our simplified model, we assume there exists a master list of all population groups divided into population group levels. This master list may include \emph{empty} population groups (combinations of geographic entities and characteristic iterations with no associated records in the private dataframe).  We assume the following model for population groups:

\begin{itemize}
    \item SafeTab-P produces tabulations for population group levels $\mathcal{P}_1,\ldots,\mathcal{P}_{\omega}$. That is, it should produce a tabulation for each population group $P \in \mathcal{P}_i$ for $1 \le i \le {\omega}$. For example, $\mathcal{P}_i$ may be the level (State, Detailed) consisting of population groups, such as (Iowa, Albanian Alone) and (Kansas, German Alone or in any Combination).
    \item SafeTab-P is provided  privacy-loss budgets for each population group level $\rho_1, \ldots, \rho_{\omega}$ with $\rho_i$ corresponding to the budget for population group level $\mathcal{P}_i$. Privacy-loss budgets are described in greater detail in Section \ref{sec:privacy-prelim}. For now, we note that each $\rho_i$ is a positive real-valued number.
    \item For each $\mathcal{P}_i$, we assume we have a function $g_i: \mathcal{I} \rightarrow 2^{\mathcal{P}_i}$, where $\mathcal{I}$ is the domain of records in the private dataframe. That is, $g_i$ maps a record $r$ to the subset of population groups at level $i$ to which it belongs (i.e., $g_i(r) \subset \mathcal{P}_i$). For example, suppose $i$ corresponds to the (State, Regional) level, the record $r$'s BlockID uniquely identifies the state of residence as Idaho and its RaceEth attribute encodes Nigerian, Beninese, Tongan, and Not Hispanic or Latino. Then $g_i(r)$ would associate the record with the following population groups: (Idaho, Sub-Saharan African, Alone or in any Combination) and (Idaho, Polynesian Alone or in any Combination).  
    \item We assume the stability of $g_i$, denoted by $\Delta(g_i)$, is known. The stability is defined as the maximum number of population groups a record could belong to in a level. Formally, $\Delta(g_i) = \max_{r \in \mathcal{I}} |g_i(r)|$ \cite{McSherry09}. Importantly, this value defines what could be the maximum based on any hypothetically possible record, rather than defining what is the maximum based on the collected 2020 Census data. In other words, the stability is a data-independent value. As described earlier, this value is $\Delta(g_i) = 9$ for all population group levels tabulated in Detailed DHC-A. 
\end{itemize}

The main algorithm is presented in Algorithm~\ref{alg:safetab-main-algorithm}.
This algorithm proceeds by looping over the population group levels.
For each population group level, we apply $g_i$ to the dataframe to map each record to the set of population groups it is associated with.
Then, for each population group in the level, we call the tabulation function \textsc{TabulatePopulationGroup}, passing in a dataframe containing just the records in that population group.

The pseudocode for the procedure \textsc{TabulatePopulationGroup} is given in Algorithm~\ref{alg:safetab-tabulate-pop-group}. This code tabulates a single population group.
Population groups are characterized based on the tabulation we would like to compute. In particular, we assume we are given a set TotalOnly of population groups for which only a T01001 total count of the population group should be tabulated. We check whether the given group is a member of this set.
If it is, we call the \textsc{NoisyCount} function on the population group, which tabulates a noisy total count for the group.
Otherwise, we use a two stage algorithm. The computations in both stages require noise infusion to ensure the entire algorithmic process is covered by the differential privacy guarantee.
We first compute a noisy count of the group using \textsc{NoisyCount} but using only a fraction (denoted $\epsfrac$) of the available privacy-loss budget. Next, we compare this noisy count against a set of given thresholds, denoted $\Theta_1, \Theta_2,$ and $\Theta_3$.  Depending on which thresholds the noisy count exceeds, we compute sex by age noisy counts with a varying degree of age bin sizes. Age bins are coarser for smaller noisy counts. These sex by age counts are also computed by \textsc{NoisyCount} using the remaining privacy-loss budget. 

The pseudocode for the procedure \textsc{NoisyCount} is given in Algorithm~\ref{alg:base-discrete-gaussian}. This procedure computes the number of rows in the dataframe and adds noise from a discrete Gaussian distribution. 

The notation used in this section and the algorithm pseudocode is summarized in Table~\ref{tab:algorithm-notation}.

\begin{table}[t]
    \centering
    \begin{tabular}{c p{.8\linewidth}}
        \toprule
        \textbf{Notation} & \textbf{Description} \\
        \midrule
        $\omega$ & The number of population group levels \\
        $\mathcal{P}_i$ & Population group level $i$  \\
        $\rho_i$ & The privacy-loss budget allocated to population group level $i$ \\
        $g_i$ & A function mapping records to the set of population groups in $\mathcal{P}_i$ to which the record belongs \\
        $\Delta(g_i)$ & $\max_{r \in \mathcal{I}} |g_i(r)|$
    \end{tabular}
    \caption{A summary of the notation used in Section~\ref{sec:algorithm-description}}
    \label{tab:algorithm-notation}
\end{table}

\begin{algorithm}[t]
\caption{\label{alg:safetab-main-algorithm} The main \safetabp algorithm.}
\begin{algorithmic}[1]
\Require
$df$: A private dataframe with attributes [BlockID, RaceEth, Sex, Age] and one row for each person in the United States.
\Require $\{\rho_i\}_{i \in [1,\omega]}$: Privacy loss parameters for each population group level $i \in [1, \omega]$.
\Require $\gamma$: The fraction of the privacy-loss budget to be used in Stage 1 of the two stage tabulation algorithm.

\Procedure{SafeTab-P}{$df$, $\{\rho_i\}$, $\gamma$}

\For{$i \in [1, \omega]$} \label{line:pop-group-level-loop}
\State $df_i \leftarrow df$.flatmap($g_i$);
\Comment{$df_i$ has schema [PopGroup, Sex, Age]}
\State $s \leftarrow \Delta(g_i)$ 
\Comment{1 row in $df$ may result in $\leq s$ rows in $df_i$}
\For{$P \in \mathcal{P}_i$ \label{line:iteration-loop}}
\State $df_P$ $\leftarrow$ $df_i$.filter(PopGroup $== P$)
\State \Call{TabulatePopulationGroup}{$df_p$, $P$, $\rho_i/s$, $\epsfrac$}
\EndFor
\EndFor
\EndProcedure
\end{algorithmic}
\end{algorithm}

\begin{algorithm}[t]
\caption{\label{alg:safetab-tabulate-pop-group} Subroutine of \safetabp to tabulate statistics for a single population group.}
\begin{algorithmic}[1]
\Require $df$: A private dataframe with attributes [PopGroup, Sex, Age]. This dataframe should contain the records in the population group.
\Require $P$: The population group.
\Require $\rho$: Privacy-loss budget for this subroutine.
\Require $\epsfrac$: Fraction of $\rho$ used in the adaptive algorithm

\Procedure{TabulatePopulationGroup}{$df, P, \rho, \epsfrac$}

\If{$P$ $\in$ TotalOnly }
\State \textit{// For TotalOnly population groups, only report noisy total counts}
\State \textbf{Output} \Call{NoisyCount}{df.count(), $\rho$} \label{line:noisy-total-only}

\State
\Else
\State \textit{// For the rest of the population groups, adaptively choose the statistics released}
\State \textit{//  based on the noisy total count of the population group.}
\State \textit{// Step 1: Compute the noisy total count using $\epsfrac\rho$ privacy-loss budget}
\State total $\leftarrow$ \Call{NoisyCount}{df.count(), $\epsfrac\rho$} \label{line:stage1-count}
\Comment{Compute noisy total}
\State
\State \textit{// Step 2: Release statistics based on the noisy count with $(1 - \epsfrac)\rho$ privacy-loss budget}

\If{total $< \Theta_1$}
\State \textbf{Output} \Call{NoisyCount}{df.count(), $(1-\epsfrac)\rho$}\label{line:age1} 
\Comment{Output the total}

\ElsIf{total $< \Theta_2$}
df\_group $\leftarrow$ df.map(Age $\rightarrow$ Age4).groupby(Sex, Age4)
\State \textbf{Output} {\Call{NoisyCount}{df\_group.count(), $(1-\epsfrac)\rho$}} \label{line:age4}
\Comment{Sex X Age4 marginal}

\ElsIf{total $< \Theta_3$}
df\_group $\leftarrow$ df.map(Age $\rightarrow$ Age9).groupby(Sex, Age9)
\State \textbf{Output} {\Call{NoisyCount}{df\_group.count(), $(1-\epsfrac)\rho$}} \label{line:age9}
\Comment{Sex X Age9 marginal}

\Else 
df\_group $\leftarrow$ df.map(Age $\rightarrow$ Age23).groupby(Sex, Age23)
\State \textbf{Output} {\Call{NoisyCount}{df\_group.count(), $(1-\epsfrac)\rho$}}\label{line:age23}
\Comment{Sex X Age23 marginal}

\EndIf
\EndIf
\EndProcedure
\end{algorithmic}
\end{algorithm}

\begin{algorithm}[t]
\caption{\label{alg:base-discrete-gaussian} The discrete Gaussian mechanism for vectors.}
\begin{algorithmic}[1]
\Require $a$: An $n$ dimensional vector of integers.
\Require $\rho$: A privacy-loss parameter.
\Procedure{NoisyCount}{$a, \rho$}
\State $y \gets \mathcal{N}^{n}_{\mathbb{Z}}\left(\frac{1}{2\rho}\right)$
\State \textbf{return} $a + y$
\EndProcedure
\end{algorithmic}
\end{algorithm}

\clearpage


\section{SafeTab-P Privacy Analysis}
\label{sec:safetab-discrete-gaussian-privacy}

The algorithms employed by the Census Bureau's DAS to preserve the privacy of individual census respondents must adhere to the privacy standards imposed by zero-concentrated differential privacy (zCDP). In this section, we show that the SafeTab-P algorithm presented in Section~\ref{sec:algorithm-description} does satisfy the requirements of zCDP. Before presenting the proof, we first provide necessary background on zCDP. 

\subsection{Privacy Preliminaries}
\label{sec:privacy-prelim}

We provide the formal mathematical definition of zCDP and then describe relevant privacy properties that zCDP ensures. 

\subsubsection{Privacy definitions}
\label{sec:privacy-defintions}

\begin{definition}[Neighboring Databases]
\label{def:neighboring-databases}
Let $x,x'$ be databases represented as multisets of tuples. We say that $x$ and $x'$ are \emph{neighbors} if their symmetric difference is 1.
\end{definition}

We define zCDP, which bounds the \emph{R\'enyi divergence} between the distributions of a mechanism run on neighboring databases.

\begin{definition}
The \emph{R\'enyi divergence of order $\alpha$} between distribution $P$ and distribution $Q$, denoted $D_{\alpha}(P \| Q)$, is defined as
\begin{equation}
    D_{\alpha}(P \| Q) = \frac{1}{\alpha-1}\log\left(\underset{x \sim P}{\mathbb{E}} \left[ \left( \frac{P(x)}{Q(x)} \right)^{\alpha-1}\right]\right)
\end{equation}
\end{definition}

\begin{definition}(zCDP \cite{BunS16})\label{def:zcdp}
An algorithm $M: \mathcal{X} \rightarrow \mathcal{Y}$ satisfies $\rho$-zCDP if for all neighboring $x, x' \in \mathcal{X}$ and for all $\alpha \in (1, \infty)$,
\begin{equation}
    D_{\alpha}(M(x) \| M(x')) \le \alpha\rho.
\end{equation}
\end{definition}

The value $\rho$ is often called the privacy-loss budget of an algorithm. Definition \ref{def:zcdp} is sometimes called ``unbounded'' $\rho$-zCDP. There is an alternative definition known as ``bounded'' $\rho$-zCDP. The difference between (unbounded) zCDP (Definition~\ref{def:zcdp}) and bounded zCDP (to be defined in Definition~\ref{def:bounded-zcdp}) is the definition of neighboring databases.
Informally, neighboring databases under unbounded zCDP are obtained by adding/removing a record, while neighbors under bounded zCDP are obtained by changing one record\footnote{The word ``bounded'' comes from the fact that all neighboring databases are the same size.}.
This difference in definitions results in a difference in the semantics of the privacy guarantee.
Briefly, unbounded zCDP protects the presence of any record in the database, while bounded zCDP protects the value of each record in the database.
One notable difference is that bounded zCDP does not protect the total number of records in the database.

The formal definition of bounded neighbors is the following:

\begin{definition}[Bounded Neighboring Databases]
  \label{def:bounded-neighboring-databases}
  Let $x,x'$ be databases represented as multisets of tuples.
  We say that $x$ and $x'$ are \emph{bounded neighbors} if they contain the same number of tuples and are identical except for a single tuple.
\end{definition}

Using this new definition of neighbors, we can give the definition of bounded zCDP.

\begin{definition}[Bounded zCDP]
\label{def:bounded-zcdp}
An algorithm $M: \mathcal{X} \rightarrow \mathcal{Y}$ satisfies bounded $\rho$-zCDP if for all bounded neighboring databases $x, x' \in \mathcal{X}$, for all output $y \in \mathcal{Y}$, and for all $\alpha \in (1, \infty)$,
\begin{equation}
    D_{\alpha}(M(x) \| M(x')) \le  \alpha\rho.
\end{equation}
\end{definition}

The Tumult-developed libraries (see Section \ref{sec:analytics}) used to implement SafeTab-P provide unbounded zCDP guarantees. We provide an additional proof for bounded zCDP to allow for consistent comparison across other 2020 Census data releases such as the DHC that utilize bounded zCDP. A common neighbors definition allows for composition analysis across different releases rather than treating each release in isolation. The Census Bureau interest in bounded neighbors stems from a desire to model attackers that know the true database size given that several 2020 Census data products release the exact total U.S. population (according to the CEF) without noise infusion.

Unless otherwise specified, we assume unbounded $\rho$-zCDP, by default, in this paper.

\subsubsection{Composition Theorems}
\label{sec:composition}

One of the most useful and important properties of privacy definitions is their behavior under composition.
In this section, we state composition results for zCDP.

There are two types of composition we are interested in -- sequential composition and parallel composition. We first state the sequential composition results. 

\begin{lemma}(Adaptive sequential composition of zCDP \cite{BunS16})
\label{lem:sequential-composition-zcdp}
Let $M_1: \mathcal{X} \rightarrow \mathcal{Y}$ and $M_2: \mathcal{X} \times \mathcal{Y} \rightarrow \mathcal{Z}$ be mechanisms satisfying $\rho_1$-zCDP and $\rho_2$-zCDP respectively. Let $M_3(x) = M_2(x, M_1(x))$. Then $M_3$ satisfies $(\rho_1 + \rho_2)$-zCDP.
\end{lemma}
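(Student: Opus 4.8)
The plan is to reduce the adaptive composition to a statement about a single \emph{joint} mechanism, together with the post-processing (data-processing) invariance of R\'enyi divergence. I would define the augmented mechanism $M: \mathcal{X} \to \mathcal{Y} \times \mathcal{Z}$ by $M(x) = (M_1(x), M_2(x, M_1(x)))$, so that $M$ emits the full pair $(y,z)$ rather than just $z$. Since $M_3(x)$ is obtained from $M(x)$ by discarding the first coordinate, and R\'enyi divergence cannot increase under any (possibly randomized) post-processing map, it suffices to prove $D_{\alpha}(M(x) \| M(x')) \le \alpha(\rho_1 + \rho_2)$ for every pair of neighbors $x, x'$ and every $\alpha \in (1,\infty)$; the claimed bound for $M_3$ then follows immediately. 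At the outset I would also make explicit the intended meaning of ``$M_2$ satisfies $\rho_2$-zCDP'': namely that $x \mapsto M_2(x,y)$ is $\rho_2$-zCDP for \emph{each fixed} $y \in \mathcal{Y}$, with the bound uniform in $y$.

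To bound the divergence of the joint mechanism, I would exploit the fact that the joint density factors into the first-stage marginal times the second-stage conditional. Writing $p(y) = M_1(x)(y)$ and $q(y) = M_1(x')(y)$ for the first-stage densities and $M_2(x,y)(z)$, $M_2(x',y)(z)$ for the conditional densities, we have $M(x)(y,z) = p(y)\,M_2(x,y)(z)$ and $M(x')(y,z) = q(y)\,M_2(x',y)(z)$. The key observation is that the exponentiated divergence then separates into a product of sums:
\begin{equation}
  \sum_{y,z} M(x)(y,z)^{\alpha} M(x')(y,z)^{1-\alpha}
    = \sum_{y} p(y)^{\alpha} q(y)^{1-\alpha}
      \sum_{z} M_2(x,y)(z)^{\alpha} M_2(x',y)(z)^{1-\alpha}.
\end{equation}
The inner sum equals $\exp\!\big((\alpha-1) D_{\alpha}(M_2(x,y) \| M_2(x',y))\big)$, which by the $\rho_2$-zCDP guarantee of $M_2$ is at most $e^{(\alpha-1)\alpha\rho_2}$ uniformly in $y$. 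Pulling this constant out of the outer sum and recognizing what remains as $\exp\!\big((\alpha-1)D_{\alpha}(M_1(x)\|M_1(x'))\big) \le e^{(\alpha-1)\alpha\rho_1}$, I obtain an overall bound of $e^{(\alpha-1)\alpha(\rho_1+\rho_2)}$. Applying $\tfrac{1}{\alpha-1}\log(\cdot)$ yields $D_{\alpha}(M(x)\|M(x')) \le \alpha(\rho_1+\rho_2)$, and post-processing finishes the proof.

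The step I expect to require the most care is the uniform-in-$y$ condition on $M_2$: the factorization in the displayed equation is useful only because the bound $D_{\alpha}(M_2(x,y)\|M_2(x',y)) \le \alpha\rho_2$ holds for \emph{every} value of the auxiliary input $y$ against the \emph{same} neighbors $x,x'$. It is precisely this uniformity that lets the constant $e^{(\alpha-1)\alpha\rho_2}$ factor out of the sum over $y$ without entangling with the $y$-dependence coming from $M_1$; without it the argument collapses. A secondary, purely technical point is the discrete-versus-continuous distinction: I have written sums, but for continuous output spaces the identical manipulation goes through with Radon--Nikodym derivatives in place of probability masses and integrals in place of sums, under the absolute-continuity conditions that are implicit whenever the divergences in question are finite.
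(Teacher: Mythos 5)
The paper itself gives no proof of this lemma, citing it directly from \cite{BunS16}; your argument is correct and follows essentially the same route as that reference. Your displayed factorization of the exponentiated divergence is an inline derivation of the conditional-divergence bound that the paper records separately as Lemma~\ref{lem:joint-divergence} (Lemma 2.2 of \cite{BunS16}), and combining it with the data-processing inequality for R\'enyi divergence --- together with your correctly flagged requirement that the $\rho_2$ bound hold uniformly over the auxiliary input $y$ --- is exactly the standard proof of adaptive composition.
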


Next, we state and prove a generalized parallel composition lemma for our privacy definition. To the best of our knowledge, these results are novel. The standard statement of parallel composition is a special case of our generalization.

Let the \emph{maximum degree} of a set family $F = \{S_i\}$, $S_i \subseteq S$ be the maximum number of sets containing any fixed element of $S$. That is, 
\begin{equation}
    degree(F) = max_{s \in S} |\{S_i \in F | s \in S_i\}|
\end{equation}

\begin{lemma}
\label{lem:generalized-parallel-composition-zcdp}
Let $F = \{S_1,...,S_k\}$ be a family of subsets of the input domain with maximum degree $z$. Let $M_1,\ldots,M_k$ each provide $\rho$-zCDP. Then the mechanism $M(x) = (M_1(x \cap S_1), \ldots, M_k(x \cap S_k))$ provides $(z \cdot \rho)$-zCDP.
\end{lemma}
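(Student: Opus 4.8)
The plan is to reduce the claim to the \emph{tensorization} (additivity) property of R\'enyi divergence across independent coordinates, since the output coordinates $M_1(x \cap S_1), \ldots, M_k(x \cap S_k)$ are produced using independent randomness and therefore jointly form a product distribution. First I would fix neighboring databases $x, x'$. Because their symmetric difference is exactly one tuple, I can assume without loss of generality that $x' = x \uplus \{r\}$ for a single record $r$ (the removal case is symmetric). The key combinatorial observation is that, by the definition of maximum degree, this record $r$ lies in at most $z$ of the subsets $S_1, \ldots, S_k$. For every index $i$ with $r \notin S_i$ we have $x \cap S_i = x' \cap S_i$, so the $i$-th marginals $M_i(x \cap S_i)$ and $M_i(x' \cap S_i)$ are \emph{identically distributed}; for every index $i$ with $r \in S_i$, the restricted databases $x \cap S_i$ and $x' \cap S_i$ differ by exactly the record $r$ and are hence neighbors in the sense of Definition~\ref{def:neighboring-databases}.

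The central technical step is the additivity of R\'enyi divergence for product measures: if $P = \prod_i P_i$ and $Q = \prod_i Q_i$, then $D_{\alpha}(P \| Q) = \sum_i D_{\alpha}(P_i \| Q_i)$. I would establish this directly from the definition of $D_{\alpha}$, factoring the likelihood ratio $P(x)/Q(x) = \prod_i P_i(x_i)/Q_i(x_i)$ and using independence of the coordinates to turn the expectation of the product into a product of expectations, after which the logarithm converts the product into the desired sum. This is the step that carries the real weight of the argument; I expect it to be a clean, standard calculation rather than a genuine obstacle, the only subtlety being to state explicitly that the $M_i$ use independent randomness so that $M(x)$ really is a product distribution.

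With tensorization in hand, the conclusion is bookkeeping. Applying it to $P = M(x)$ and $Q = M(x')$ gives $D_{\alpha}(M(x) \| M(x')) = \sum_{i=1}^{k} D_{\alpha}(M_i(x \cap S_i) \| M_i(x' \cap S_i))$. Every term with $r \notin S_i$ vanishes because the marginals are identical, and each of the at most $z$ remaining terms is bounded by $\alpha \rho$ using that $M_i$ satisfies $\rho$-zCDP on the neighbors $x \cap S_i, x' \cap S_i$. Summing yields $D_{\alpha}(M(x) \| M(x')) \le z \cdot \alpha \rho = \alpha (z \rho)$ for every $\alpha \in (1, \infty)$, which is precisely the definition of $(z\rho)$-zCDP. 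As a sanity check I would confirm that the standard parallel composition statement is recovered in the disjoint case $z = 1$, where each record lands in a single set and the privacy cost reduces to $\rho$.
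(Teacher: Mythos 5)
Your proposal is correct and follows essentially the same route as the paper: both decompose $D_{\alpha}(M(x)\,\|\,M(x'))$ into a sum over coordinates via additivity of R\'enyi divergence for product distributions (the paper cites this as Lemma~2.2 of Bun--Steinke rather than rederiving it), observe that only the at most $z$ indices whose sets contain the differing record contribute, and bound each such term by $\alpha\rho$. The only difference is that you propose to prove the tensorization identity from scratch, which is a fine but inessential elaboration.
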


The proof of Lemma \ref{lem:generalized-parallel-composition-zcdp} requires the following property on the R\'enyi divergence, given in Lemma 2.2 of \cite{BunS16}.

\begin{lemma}(Lemma 2.2 of \cite{BunS16})
\label{lem:joint-divergence}

Let $P$ and $Q$ be distributions on $\Omega \times \Theta$. Let $P_\Omega$ and $Q_\Omega$ denote the marginal distributions on $\Omega$. Likewise, let $P_\Theta$ and $Q_\Theta$ denote the marginal distributions on $\Theta$. For $x \in \Omega$, let $P^x_\Theta$ and $Q^x_\Theta$ denote the conditional distributions on $\Theta$ conditioned on the first coordinate. Then
\begin{equation}
    D_{\alpha}(P \| Q) \leq D_{\alpha}(P_{\Omega} || Q_{\Omega}) + \max_{x\in \Omega}D_{\alpha}(P^x_{\Theta} || Q^x_{\Theta})
\end{equation}

When $P$ and $Q$ are product distributions, then this becomes the following.
\begin{equation}
    D_{\alpha}(P \| Q) = D_{\alpha}(P_{\Omega} || Q_{\Omega}) + D_{\alpha}(P_{\Theta} || Q_{\Theta})
\end{equation}

\end{lemma}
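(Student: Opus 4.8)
The plan is to reduce everything to the ``exponentiated'' form of the R\'enyi divergence, where the chain rule for probability distributions turns into a clean multiplicative factorization. First I would fix $\alpha > 1$ and define $E_\alpha(P \| Q) = \sum_x P(x)^\alpha Q(x)^{1-\alpha}$, observing directly from the definition that $\mathbb{E}_{x\sim P}[(P(x)/Q(x))^{\alpha-1}] = E_\alpha(P\|Q)$ and hence $E_\alpha(P\|Q) = e^{(\alpha-1)D_\alpha(P\|Q)}$. Since $\alpha - 1 > 0$, the correspondence $D_\alpha \mapsto E_\alpha$ is monotone increasing, so it suffices to establish the multiplicative statement for $E_\alpha$ and then take logarithms at the end.

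Next I would apply the chain rule to the joint laws, writing $P(x,y) = P_\Omega(x)\,P^x_\Theta(y)$ and $Q(x,y) = Q_\Omega(x)\,Q^x_\Theta(y)$. Substituting into $E_\alpha$ and distributing the exponents $\alpha$ and $1-\alpha$ across each factored product, the double sum separates as
\begin{equation}
E_\alpha(P\|Q) = \sum_x P_\Omega(x)^\alpha Q_\Omega(x)^{1-\alpha} \left( \sum_y P^x_\Theta(y)^\alpha Q^x_\Theta(y)^{1-\alpha} \right) = \sum_x P_\Omega(x)^\alpha Q_\Omega(x)^{1-\alpha}\, E_\alpha(P^x_\Theta \| Q^x_\Theta).
\end{equation}
This is the crux: the exponents split precisely because the joint density factors into marginal times conditional, and the inner sum is itself a conditional exponentiated divergence. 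To get the inequality I would bound this inner term uniformly, using $E_\alpha(P^x_\Theta\|Q^x_\Theta) = e^{(\alpha-1)D_\alpha(P^x_\Theta\|Q^x_\Theta)} \le e^{(\alpha-1)\max_{x'}D_\alpha(P^{x'}_\Theta\|Q^{x'}_\Theta)}$ (again invoking $\alpha-1>0$), pull this constant factor out of the sum over $x$, and recognize the remaining sum as $E_\alpha(P_\Omega\|Q_\Omega)$, giving
\begin{equation}
E_\alpha(P\|Q) \le e^{(\alpha-1)\max_x D_\alpha(P^x_\Theta\|Q^x_\Theta)}\, E_\alpha(P_\Omega\|Q_\Omega).
\end{equation}
Taking logarithms and dividing by $\alpha-1>0$ recovers the claimed inequality. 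For the product case, the conditionals $P^x_\Theta = P_\Theta$ and $Q^x_\Theta = Q_\Theta$ are independent of $x$, so the bracketed inner sum is a single constant that factors out of the $x$-sum with no inequality incurred; this turns the first display into the exact product $E_\alpha(P\|Q) = E_\alpha(P_\Omega\|Q_\Omega)\,E_\alpha(P_\Theta\|Q_\Theta)$, and taking logs yields the additive equality.

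I do not expect a genuine conceptual obstacle; the delicate points are bookkeeping. I must keep $\alpha > 1$ throughout so that $\alpha - 1 > 0$ and every monotonicity step (exponentiating by $e^{(\alpha-1)(\cdot)}$, and finally dividing by $\alpha-1$) preserves the direction of the inequalities. I should also note that terms with $P_\Omega(x)=0$ contribute nothing, so the maximum effectively ranges over the support of $P_\Omega$; extending it to all of $\Omega$ only weakens the bound and is harmless. Finally, absolute continuity of $P$ with respect to $Q$ is inherited by the marginals and conditionals, so the ratios are well defined wherever $P$ places mass, and the remaining zero-probability terms are handled by the standard convention that they contribute zero to these sums.
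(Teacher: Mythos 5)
Your proof is correct: the reduction to the exponentiated divergence $E_\alpha(P\|Q)=e^{(\alpha-1)D_\alpha(P\|Q)}$, the chain-rule factorization $P(x,y)=P_\Omega(x)P_\Theta^x(y)$, the uniform bound on the inner conditional term, and the exact factorization in the product case are all sound, and your attention to $\alpha-1>0$ and zero-probability terms covers the delicate points. The paper itself states this lemma by citation to Lemma 2.2 of Bun and Steinke without reproducing a proof, and your argument is essentially the same as the one in that cited source (stated there with integrals rather than sums), so there is nothing further to reconcile.
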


With this, we can prove Lemma~\ref{lem:generalized-parallel-composition-zcdp}.

\begin{proof}[Proof of Lemma~\ref{lem:generalized-parallel-composition-zcdp}]
Suppose $x$ and $x'$ are neighbors and let $r$ be the (only) record in their symmetric difference.
Let $i_1, \ldots, i_j$ be the indices of the sets in $F$ containing $r$.
$j \le z$ since the maximum degree of $F$ is $z$.

\begin{align}
D_{\alpha}(M(x) \| M(x')) &= \sum_{i=1}^k D_{\alpha}(M_i(x \cap S_i) \| M_i(x' \cap S_i)) \\
&= \sum_{i \in \{i_1,\ldots,i_j\}} D_{\alpha}(M_i(x \cap S_i) \| M_i(x' \cap S_i)) \\
&\le \sum_{i \in \{i_1,\ldots,i_j\}} \alpha \cdot \rho \\
&\le \alpha \cdot (z \cdot \rho).
\end{align}
\end{proof}

\subsubsection{Postprocessing}
\label{sec:post-processing-background}

Zero-concentrated differential privacy is closed under postprocessing, meaning that the privacy guarantee cannot be weakened by manipulating the outputs of a zCDP mechanism without reference to the protected inputs.

\begin{lemma}(Postprocessing for zCDP \cite{BunS16})
\label{lem:post-processing}
Let $M: \mathcal{X} \rightarrow \mathcal{Y}$ and $f:\mathcal{Y} \rightarrow \mathcal{Z}$ be randomized algorithms. Suppose $M$ satisfies $\rho$-zCDP. Then $f \circ M: \mathcal{X} \rightarrow \mathcal{Z}$ satisfies $\rho$-zCDP.
\end{lemma}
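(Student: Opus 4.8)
The plan is to reduce the claim to the data processing inequality for R\'enyi divergence, which I will obtain from the joint divergence bound of Lemma~\ref{lem:joint-divergence}. Fix neighboring databases $x, x' \in \mathcal{X}$ and an order $\alpha \in (1,\infty)$; by Definition~\ref{def:zcdp} it suffices to show $D_\alpha((f\circ M)(x) \,\|\, (f\circ M)(x')) \le \alpha\rho$. The crucial structural observation is that $f$ is applied to the output of $M$ identically, regardless of whether the input was $x$ or $x'$: a randomized $f$ can be described by a single conditional distribution (Markov kernel) $K(\cdot \mid y)$ on $\mathcal{Z}$ that depends only on the intermediate value $y \in \mathcal{Y}$, not on the database.

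First I would set up the appropriate joint distributions. Let $P$ be the joint law of the pair $(M(x), f(M(x)))$ on $\mathcal{Y} \times \mathcal{Z}$, and let $Q$ be the joint law of $(M(x'), f(M(x')))$. Then, in the notation of Lemma~\ref{lem:joint-divergence} with $\Omega = \mathcal{Y}$ and $\Theta = \mathcal{Z}$, the first marginals are $P_\Omega = M(x)$ and $Q_\Omega = M(x')$, while the conditional distributions of the second coordinate given the first are $P^y_\Theta = K(\cdot \mid y) = Q^y_\Theta$ for every $y$: they coincide precisely because $f$ does not see the database, only the output of $M$. Hence $\max_y D_\alpha(P^y_\Theta \| Q^y_\Theta) = 0$, and Lemma~\ref{lem:joint-divergence} yields
\[
D_\alpha(P \| Q) \le D_\alpha(M(x) \| M(x')) + 0 = D_\alpha(M(x) \| M(x')).
\]

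Next I would pass from the joint distributions to the $\mathcal{Z}$-marginals $P_\Theta = (f\circ M)(x)$ and $Q_\Theta = (f\circ M)(x')$. Marginalizing onto the second coordinate is the deterministic map $(y,z) \mapsto z$, and R\'enyi divergence is non-increasing under any such measurable map, so $D_\alpha(P_\Theta \| Q_\Theta) \le D_\alpha(P \| Q)$. Chaining this with the previous inequality and the $\rho$-zCDP guarantee of $M$ gives
\[
D_\alpha((f\circ M)(x)\,\|\,(f\circ M)(x')) = D_\alpha(P_\Theta\|Q_\Theta) \le D_\alpha(P\|Q) \le D_\alpha(M(x)\|M(x')) \le \alpha\rho,
\]
which is exactly the desired bound. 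Since $x, x'$ and $\alpha$ were arbitrary, $f\circ M$ satisfies $\rho$-zCDP.

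I expect the main obstacle to be the two facts about R\'enyi divergence that do the real work: (i) that the conditional-channel term in Lemma~\ref{lem:joint-divergence} genuinely vanishes, which hinges on correctly formalizing ``$f$ is the same channel on both branches'' for a randomized $f$, handled cleanest via the Markov-kernel representation; and (ii) the monotonicity of $D_\alpha$ under the marginalization map. Both are standard, but they are where the postprocessing invariance is actually purchased, and the remainder is bookkeeping. If one prefers not to invoke marginalization-monotonicity as a separate black box, it too can be derived from Lemma~\ref{lem:joint-divergence} by bounding $D_\alpha(P_\Theta\|Q_\Theta)$ after swapping the roles of the two coordinates, so the entire argument can be made to rest only on the joint divergence lemma already available in the excerpt.
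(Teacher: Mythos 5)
The paper offers no proof of Lemma~\ref{lem:post-processing}; it is cited directly from Bun and Steinke, whose own argument is a one-line appeal to the data processing inequality (DPI) for R\'enyi divergence. Your argument is, at bottom, the same route: the inequality $D_\alpha(P_\Theta \| Q_\Theta) \le D_\alpha(P \| Q)$ that you invoke in the final step (R\'enyi divergence is non-increasing under marginalization) \emph{is} the DPI, and once you grant it you could have applied it directly to the channel $K(\cdot \mid y)$ acting on $M(x)$ and $M(x')$ to obtain $D_\alpha(f(M(x)) \| f(M(x'))) \le D_\alpha(M(x) \| M(x')) \le \alpha\rho$ in one line. The detour through the joint laws and Lemma~\ref{lem:joint-divergence} is therefore correct but redundant: the step it supplies (joint divergence bounded by the first-coordinate marginal divergence when the conditionals agree) is not where postprocessing invariance is purchased; that happens entirely in the marginalization step you leave as a black box.

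One concrete error: your closing claim that marginalization-monotonicity can be derived from Lemma~\ref{lem:joint-divergence} by swapping the roles of the two coordinates is false. Swapping coordinates yields $D_\alpha(P \| Q) \le D_\alpha(P_\Theta \| Q_\Theta) + \max_{z} D_\alpha(P^z_\Omega \| Q^z_\Omega)$, which is another \emph{upper} bound on the joint divergence, not the lower bound $D_\alpha(P_\Theta \| Q_\Theta) \le D_\alpha(P \| Q)$ you need; the lemma never bounds a marginal divergence by the joint divergence. So the argument cannot be made to rest on Lemma~\ref{lem:joint-divergence} alone: the DPI for R\'enyi divergence must be imported as a genuinely external fact (e.g., van Erven and Harrem\"oes), at which point the rest of your construction is unnecessary. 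With that fact granted, your main chain of inequalities is valid and the conclusion follows.
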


\subsubsection{Base Mechanism}
\label{sec:base-mechansisms}

\begin{definition}[L2 Sensitivity]
\label{def:sensitivity}
Given a vector function $q: \mathcal{X} \rightarrow \mathbb{Z}^n$, the sensitivity of $q$ is \linebreak $\sup_{x, x'} \|q(x)-q(x')\|_2$ where $x$ and $x'$ are neighboring databases and $\| \cdot \|_2$ is the Euclidean norm.
\end{definition}

There is an equivalent notion of bounded sensitivity.
\begin{definition}[Bounded L2 Sensitivity]
\label{def:bounded_sensitivity}
Given a vector function $q: \mathcal{X} \rightarrow \mathbb{Z}^n$, the sensitivity of $q$ is $\sup_{x ,x'} \|q(x)-q(x')\|_2$ where $x$ and $x'$ are bounded-neighboring databases and $\| \cdot \|_2$ is the Euclidean norm.
\end{definition}

\begin{definition}
\label{def:discrete-gaussian-distribution}
The discrete Gaussian distribution $\mathcal{N}_{\mathbb{Z}}(\sigma^2)$ centered at 0 is
\begin{equation}
\forall x \in \mathbb{Z}, \quad \Pr[X=x] = \frac{e^{-x^2/2\sigma^2}}{\sum_{y \in \mathbb{Z}}e^{-y^2/2 \sigma^2}}.
\end{equation}
\end{definition}

\begin{lemma}{\cite{CanonneK2020}}
\label{lem:discrete-gaussian-satisfies-zcdp}
Let $q: \mathcal{X} \rightarrow \mathbb{Z}^n$ with L2 sensitivity $\Delta$. 
Then outputting \textsc{NoisyCount}$(q(x), \rho)$ from Algorithm~\ref{alg:base-discrete-gaussian} satisfies $\Delta^2\rho$-zCDP. 
\end{lemma}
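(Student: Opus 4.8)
The plan is to reduce the $n$-dimensional claim to a one-dimensional statement about the discrete Gaussian and then appeal to the known univariate R\'enyi-divergence bound. Write $M(x) = \textsc{NoisyCount}(q(x), \rho) = q(x) + y$, where $y \sim \mathcal{N}^n_{\mathbb{Z}}(\sigma^2)$ with $\sigma^2 = 1/(2\rho)$ is a vector of $n$ independent discrete Gaussians (Definition~\ref{def:discrete-gaussian-distribution}). For neighboring $x, x'$, the output distributions of $M(x)$ and $M(x')$ are both product distributions over the $n$ coordinates, the $i$th marginal of $M(x)$ being a discrete Gaussian centered at $q_i(x)$. First I would apply the product (equality) form of Lemma~\ref{lem:joint-divergence} repeatedly, inducting on the coordinates, to obtain
\begin{equation}
D_{\alpha}(M(x) \| M(x')) = \sum_{i=1}^n D_{\alpha}\big(\mathcal{N}_{\mathbb{Z}}(\sigma^2) + q_i(x) \,\big\|\, \mathcal{N}_{\mathbb{Z}}(\sigma^2) + q_i(x')\big).
\end{equation}

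Second, I would bound each summand. By translation invariance of the integer lattice, the $i$th term equals $D_{\alpha}(\mathcal{N}_{\mathbb{Z}}(\sigma^2) + v_i \| \mathcal{N}_{\mathbb{Z}}(\sigma^2))$ with $v_i = q_i(x) - q_i(x')$; crucially $v_i \in \mathbb{Z}$ because $q$ is integer-valued, which is exactly the regime in which the univariate bound holds. The one-dimensional discrete-Gaussian bound of \cite{CanonneK2020} then gives $D_{\alpha}(\mathcal{N}_{\mathbb{Z}}(\sigma^2) + v_i \| \mathcal{N}_{\mathbb{Z}}(\sigma^2)) \le \alpha v_i^2 / (2\sigma^2)$ for every integer shift and every $\alpha \in (1,\infty)$. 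Summing over $i$ and invoking the definition of L2 sensitivity (Definition~\ref{def:sensitivity}),
\begin{equation}
D_{\alpha}(M(x) \| M(x')) \le \frac{\alpha}{2\sigma^2} \sum_{i=1}^n v_i^2 = \frac{\alpha \, \|q(x)-q(x')\|_2^2}{2\sigma^2} \le \frac{\alpha \Delta^2}{2\sigma^2}.
\end{equation}
Substituting $\sigma^2 = 1/(2\rho)$ yields $D_{\alpha}(M(x) \| M(x')) \le \alpha \cdot (\Delta^2 \rho)$, which is precisely the $\Delta^2\rho$-zCDP condition of Definition~\ref{def:zcdp}.

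The routine parts are the product decomposition and the summation over coordinates; the genuine technical content is the one-dimensional bound $D_{\alpha}(\mathcal{N}_{\mathbb{Z}}(\sigma^2) + v \| \mathcal{N}_{\mathbb{Z}}(\sigma^2)) \le \alpha v^2/(2\sigma^2)$, and this is where I expect the real obstacle to lie. In the continuous case the analogous statement holds with equality and drops out of a short Gaussian-integral computation, but for the discrete Gaussian the normalizing constant $\sum_{y\in\mathbb{Z}} e^{-y^2/2\sigma^2}$ does not cancel between the two shifted lattices, so one cannot mimic the continuous calculation directly; establishing that the discrete divergence nonetheless never exceeds the continuous bound is the nontrivial step, and I would import it wholesale from \cite{CanonneK2020} rather than rederive it. Two minor checks are worth flagging: that the equality (rather than merely the inequality) form of Lemma~\ref{lem:joint-divergence} is justified, which holds because both $M(x)$ and $M(x')$ factor as products of independent coordinates; and that the identical argument carries over to bounded neighbors verbatim, simply replacing Definition~\ref{def:sensitivity} with the bounded sensitivity of Definition~\ref{def:bounded_sensitivity}.
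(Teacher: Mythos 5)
Your argument is correct and is essentially the standard proof of this result: the paper itself offers no proof, deferring entirely to \cite{CanonneK2020}, and your decomposition into coordinatewise divergences via the product form of Lemma~\ref{lem:joint-divergence}, followed by the univariate integer-shift bound $D_{\alpha}(\mathcal{N}_{\mathbb{Z}}(\sigma^2)+v\|\mathcal{N}_{\mathbb{Z}}(\sigma^2))\le \alpha v^2/(2\sigma^2)$ and the L2-sensitivity bound, is exactly how that reference establishes the multivariate statement. You also correctly locate the genuine technical content in the one-dimensional bound, which is properly imported rather than rederived.
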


\begin{lemma}{\cite{CanonneK2020}}
\label{lem:discrete-gaussian-satisfies-bounded-zcdp}
Let $q: \mathcal{X} \rightarrow \mathbb{Z}^n$ with bounded L2 sensitivity $\Delta$. 
Then outputting \textsc{NoisyCount}$(q(x), \rho)$ from Algorithm~\ref{alg:base-discrete-gaussian} satisfies bounded $\Delta^2\rho$-zCDP. 
\end{lemma}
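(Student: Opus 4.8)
The plan is to recognize that this lemma is the bounded-neighbor analogue of Lemma~\ref{lem:discrete-gaussian-satisfies-zcdp}, and that both statements rest on a single property of the discrete Gaussian that is indifferent to the choice of neighbor relation. That property is: for any two \emph{fixed} integer vectors $v, v' \in \mathbb{Z}^n$, the R\'enyi divergence of order $\alpha$ between $v + \mathcal{N}^n_{\mathbb{Z}}(\sigma^2)$ and $v' + \mathcal{N}^n_{\mathbb{Z}}(\sigma^2)$ is at most $\alpha \cdot \frac{\|v - v'\|_2^2}{2\sigma^2}$. This is precisely the discrete Gaussian concentration bound of~\cite{CanonneK2020}, and the crucial point is that it depends on the two inputs \emph{only} through their Euclidean distance $\|v - v'\|_2$; it says nothing about how $v$ and $v'$ arose, e.g.\ whether via adding/removing a record or via changing one.

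First I would instantiate this bound with $v = q(x)$, $v' = q(x')$, and the noise parameter $\sigma^2 = \frac{1}{2\rho}$ used by \textsc{NoisyCount} in Algorithm~\ref{alg:base-discrete-gaussian}. Because the query map $x \mapsto q(x)$ is deterministic, the output of \textsc{NoisyCount}$(q(x),\rho)$ is distributed exactly as $q(x) + \mathcal{N}^n_{\mathbb{Z}}(\tfrac{1}{2\rho})$, so for every $\alpha \in (1,\infty)$,
\begin{equation}
  D_{\alpha}\big(\textsc{NoisyCount}(q(x),\rho) \,\|\, \textsc{NoisyCount}(q(x'),\rho)\big)
  \le \alpha \cdot \frac{\|q(x) - q(x')\|_2^2}{2 \cdot \tfrac{1}{2\rho}}
  = \alpha \cdot \rho\, \|q(x) - q(x')\|_2^2 .
\end{equation}

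Next I would let $x, x'$ range over \emph{bounded} neighbors (Definition~\ref{def:bounded-neighboring-databases}) rather than ordinary neighbors. By the definition of bounded L2 sensitivity (Definition~\ref{def:bounded_sensitivity}), every such pair satisfies $\|q(x) - q(x')\|_2 \le \Delta$, so the displayed divergence is at most $\alpha \cdot \Delta^2 \rho$. Since this holds for all bounded-neighboring $x, x'$ and all $\alpha$, the mechanism satisfies bounded $\Delta^2\rho$-zCDP by Definition~\ref{def:bounded-zcdp}.

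I expect no genuine obstacle here: since the bounded and unbounded notions differ only in the neighbor relation, and the underlying discrete Gaussian divergence bound factors entirely through the Euclidean distance of the query outputs, swapping the neighbor definition is a cosmetic change rather than a substantive one. The one point worth stating carefully is exactly this factoring-through-$\|q(x)-q(x')\|_2$ structure of the base mechanism, together with the observation that the relevant distance is controlled by the \emph{bounded} sensitivity precisely when $x,x'$ are bounded neighbors. (I would avoid trying to reduce the bounded case to Lemma~\ref{lem:discrete-gaussian-satisfies-zcdp} as a black box, since a single bounded neighbor step is a record \emph{change} rather than an add/remove, so such a reduction would not be direct; appealing to the L2-distance bound itself is cleaner.)
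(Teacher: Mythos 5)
Your proposal is correct, and the paper itself offers no proof of this lemma --- it is imported by citation from \cite{CanonneK2020} --- so your derivation via the discrete Gaussian R\'enyi divergence bound $D_{\alpha}(v+\mathcal{N}^n_{\mathbb{Z}}(\sigma^2)\,\|\,v'+\mathcal{N}^n_{\mathbb{Z}}(\sigma^2)) \le \alpha\|v-v'\|_2^2/(2\sigma^2)$, specialized to bounded neighbors and bounded sensitivity, is exactly the standard argument the citation stands in for. Your closing caveat is also well taken: the bound factors through $\|q(x)-q(x')\|_2$ alone, so the neighbor relation enters only through which sensitivity controls that distance, and no reduction to the unbounded lemma is needed or appropriate.
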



\subsection{Privacy Analysis}

With the provided background, we are ready to prove that SafeTab-P satisfies zCDP.

\begin{theorem}
\label{thm:safetab-satisfies-zcdp}
Let $\rho_{total} = \sum_{i=1}^{\omega} \rho_i$. Algorithm~\ref{alg:safetab-main-algorithm} satisfies (unbounded) $\rho_{total}$-zCDP.
\end{theorem}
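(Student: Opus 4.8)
The plan is to prove the claim bottom-up, establishing the privacy guarantee of the innermost subroutine and then composing outward through the two nested loops of Algorithm~\ref{alg:safetab-main-algorithm}. First I would pin down the guarantee of \textsc{NoisyCount}. Every invocation in Algorithm~\ref{alg:safetab-tabulate-pop-group} is applied either to a single total count or to a sex-by-age marginal histogram. Under the unbounded neighbor relation (Definition~\ref{def:neighboring-databases}), adding or removing one record changes a total by exactly $1$ and changes exactly one cell of a marginal histogram by $1$; in either case the $L2$ sensitivity (Definition~\ref{def:sensitivity}) is $1$. Hence Lemma~\ref{lem:discrete-gaussian-satisfies-zcdp} immediately gives that a call \textsc{NoisyCount}$(\cdot, \rho')$ satisfies $\rho'$-zCDP.

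Next I would analyze \textsc{TabulatePopulationGroup} run with budget $\rho$. In the TotalOnly branch a single \textsc{NoisyCount}$(\cdot,\rho)$ yields $\rho$-zCDP directly. In the adaptive branch, Stage~1 runs \textsc{NoisyCount}$(\cdot,\gamma\rho)$, which is $\gamma\rho$-zCDP, and Stage~2 selects one of the four age-bin granularities based on the noisy total and runs \textsc{NoisyCount}$(\cdot,(1-\gamma)\rho)$. Since the choice of which statistic to release depends on the Stage~1 output, this is an instance of adaptive sequential composition: for every fixed value of the noisy total, Stage~2 is a single sensitivity-$1$ noisy count with budget $(1-\gamma)\rho$ and hence $(1-\gamma)\rho$-zCDP. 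Applying Lemma~\ref{lem:sequential-composition-zcdp} yields $\gamma\rho + (1-\gamma)\rho = \rho$-zCDP for the whole subroutine.

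The crux of the argument, and the step I expect to be the main obstacle, is a single population group level (the inner \textbf{for} loop over $P \in \mathcal{P}_i$). The difficulty is that population groups within a level \emph{overlap}: one record may belong to as many as $s = \Delta(g_i)$ distinct groups, so standard parallel composition does not apply. Instead I would invoke the generalized parallel composition of Lemma~\ref{lem:generalized-parallel-composition-zcdp}. Define the family $F = \{S_P : P \in \mathcal{P}_i\}$ with $S_P = \{r \in \mathcal{I} : P \in g_i(r)\}$ the set of records contributing to group $P$; its maximum degree is exactly $\max_r |g_i(r)| = \Delta(g_i) = s$. The filtered dataframe $df_P$ is a deterministic function of $x \cap S_P$ (it holds one row, carrying the record's Sex and Age, for each original record in $S_P$), so each \textsc{TabulatePopulationGroup} call is a well-defined mechanism on $x \cap S_P$ that is $(\rho_i/s)$-zCDP by the previous step. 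Lemma~\ref{lem:generalized-parallel-composition-zcdp} with degree $z = s$ then shows that releasing all tabulations in level $i$ jointly is $s \cdot (\rho_i/s) = \rho_i$-zCDP; this is precisely why Algorithm~\ref{alg:safetab-main-algorithm} passes $\rho_i/s$ to each call.

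Finally, the outer \textbf{for} loop processes the $\omega$ levels in sequence on the same dataframe, each contributing $\rho_i$-zCDP, so an iterated application of sequential composition (Lemma~\ref{lem:sequential-composition-zcdp}) gives $\sum_{i=1}^{\omega}\rho_i = \rho_{total}$-zCDP, completing the proof. The one bookkeeping point to verify carefully is the correspondence $df_P \leftrightarrow x \cap S_P$ that makes the overlap structure of a level coincide with the degree of $F$; once that identification is made, the novel generalized parallel composition lemma does the essential work, and postprocessing closure (Lemma~\ref{lem:post-processing}) covers any subsequent formatting of the released counts.
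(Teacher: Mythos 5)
Your proposal is correct and follows essentially the same route as the paper: NoisyCount via Lemma~\ref{lem:discrete-gaussian-satisfies-zcdp}, adaptive sequential composition (Lemma~\ref{lem:sequential-composition-zcdp}) for the two stages of \textsc{TabulatePopulationGroup}, the generalized parallel composition of Lemma~\ref{lem:generalized-parallel-composition-zcdp} with degree $s=\Delta(g_i)$ for the overlapping groups within a level, and sequential composition across levels. The only (immaterial) difference is that for the sex-by-age marginal you argue the whole vector has $L2$ sensitivity $1$ directly, whereas the paper treats the cells as disjoint sub-mechanisms and applies Lemma~\ref{lem:generalized-parallel-composition-zcdp} with degree $1$.
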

\begin{proof}
The proof of Theorem~\ref{thm:safetab-satisfies-zcdp} follows via the combination of composition rules along with the fact that the \textsc{NoisyCount} procedure satisfies zCDP, per Lemma~\ref{lem:discrete-gaussian-satisfies-zcdp}, with respect to its inputs. In other words, we can think of \textsc{NoisyCount} as a basic building block from which we construct SafeTab-P.  

First, we construct the \textproc{TabulatePopulationGroup} procedure. That is, we claim that the procedure \textproc{TabulatePopulationGroup} in Algorithm~\ref{alg:safetab-tabulate-pop-group} satisfies $\rho$-zCDP with respect to the input dataframe, where $\rho$ is the privacy parameter input to the procedure.
Note that \textproc{TabulatePopulationGroup} actually uses one of two algorithms, depending on whether the population group is in the set TotalOnly. We consider each of these algorithms.

\textbf{Case 1:} $P \in $ TotalOnly. In this case, the procedure simply calls \textsc{NoisyCount}, which has L2 sensitivity 1, and satisfies $\rho$-zCDP.

\textbf{Case 2:} $P \not \in $ TotalOnly. In this case, the procedure can be decomposed into two parts. First, we call \textsc{NoisyCount}, which has L2 sensitivity 1,  with a budget of $\epsfrac\rho$. Then, we use the result to group the data by sex and age, we make a call to \textsc{NoisyCount}, over the vector of all the sex by age categories, with a budget of $(1-\epsfrac)\rho$.
The composition of the calls on all the sex by age groups satisfies $(1-\epsfrac)\rho$ by Lemma~\ref{lem:generalized-parallel-composition-zcdp}.
The (adaptive) composition of the two parts has total privacy loss $\rho$ by Lemma~\ref{lem:sequential-composition-zcdp}.

Next, we claim that the $i$th loop of the \textbf{for} loop on line~\ref{line:pop-group-level-loop} of Algorithm~\ref{alg:safetab-main-algorithm} satisfies $\rho_i$-zCDP.
By the definition of $s$, any particular record can appear in the input ($df_P$) of at most $s$ calls to \textsc{TabulatePopulationGroup}.
Therefore, by Lemma~\ref{lem:generalized-parallel-composition-zcdp}, the total privacy loss of the loop is $s$ times the privacy loss of \textsc{TabulatePopulationGroup}, i.e. $s \cdot \frac{\rho_i}{s} = \rho_i$.

Finally, the overall algorithm satisfies $(\sum_{i=1}^{\omega} \rho_i)$-zCDP by Lemma~\ref{lem:sequential-composition-zcdp}.
\end{proof}

\begin{theorem}
\label{cor:safetab-satisfies-bounded-zcdp}
Let $\rho_{total} = \sum_{i=1}^{\omega} \rho_i$. Algorithm~\ref{alg:safetab-main-algorithm} satisfies bounded $2\rho_{total}$-zCDP.
\end{theorem}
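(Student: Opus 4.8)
The plan is to re-run the proof of Theorem~\ref{thm:safetab-satisfies-zcdp} essentially verbatim, but under the bounded-neighbor relation (Definition~\ref{def:bounded-neighboring-databases}), replacing each ingredient by its bounded analog and tracking where a factor of $2$ is introduced. The unbounded proof uses exactly three ingredients: the per-call guarantee of \textsc{NoisyCount} (Lemma~\ref{lem:discrete-gaussian-satisfies-zcdp}), sequential composition (Lemma~\ref{lem:sequential-composition-zcdp}), and the generalized parallel composition lemma (Lemma~\ref{lem:generalized-parallel-composition-zcdp}). For the bounded setting I would use Lemma~\ref{lem:discrete-gaussian-satisfies-bounded-zcdp} in place of Lemma~\ref{lem:discrete-gaussian-satisfies-zcdp}, observe that the proof of Lemma~\ref{lem:sequential-composition-zcdp} never refers to the specific neighbor relation (it is a direct application of Lemma~\ref{lem:joint-divergence}) and hence holds verbatim for bounded neighbors, and finally establish a bounded analog of Lemma~\ref{lem:generalized-parallel-composition-zcdp}. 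The entire factor of $2$ will come from a single observation about the bounded $L_2$ sensitivity of the queries involved.

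The key observation is that every query passed to \textsc{NoisyCount} is a nonnegative histogram query: each record $r$ contributes a fixed nonnegative integer vector $\phi(r) \in \mathbb{Z}_{\ge 0}^n$ (for a total count, $\phi(r)$ is a single coordinate equal to $0$ or $1$; for a sex-by-age marginal it is the indicator of one bin), and $q(x) = \sum_{r \in x}\phi(r)$. For such a query the unbounded $L_2$ sensitivity is $\Delta_u = \max_r \|\phi(r)\|_2$. For bounded neighbors $x = w \cup \{r\}$ and $x' = w \cup \{r'\}$ we have $q(x) - q(x') = \phi(r) - \phi(r')$, so, using that $\phi(r),\phi(r') \ge 0$ componentwise and hence $\langle \phi(r), \phi(r')\rangle \ge 0$, we get $\|\phi(r) - \phi(r')\|_2^2 = \|\phi(r)\|_2^2 + \|\phi(r')\|_2^2 - 2\langle \phi(r), \phi(r')\rangle \le 2\Delta_u^2$; that is, the bounded $L_2$ sensitivity satisfies $\Delta_b^2 \le 2\Delta_u^2$. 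Combining this with Lemmas~\ref{lem:discrete-gaussian-satisfies-zcdp} and~\ref{lem:discrete-gaussian-satisfies-bounded-zcdp}, every \textsc{NoisyCount} call that is (unbounded) $\lambda$-zCDP is simultaneously bounded $2\lambda$-zCDP. Applying this doubling to each sub-call in the two-case analysis of \textsc{TabulatePopulationGroup}, together with the verbatim bounded sequential composition of its stages, shows that with input budget $\rho = \rho_i/s$ the procedure \textsc{TabulatePopulationGroup} satisfies both (unbounded) $(\rho_i/s)$-zCDP and bounded $2(\rho_i/s)$-zCDP.

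The main obstacle is the parallel composition step, because a bounded-neighbor modification is a swap $r \to r'$ involving two records rather than one. I would prove a bounded analog of Lemma~\ref{lem:generalized-parallel-composition-zcdp}: if each $M_m$ is both (unbounded) $\lambda$-zCDP and bounded $2\lambda$-zCDP, and $F=\{S_1,\dots,S_k\}$ has maximum degree $z$, then $M(x) = (M_1(x\cap S_1),\dots,M_k(x\cap S_k))$ is bounded $2z\lambda$-zCDP. The argument mirrors the unbounded one via the product-distribution case of Lemma~\ref{lem:joint-divergence}, but the nonzero terms now split into two regimes: for a set $S_m$ containing \emph{both} $r$ and $r'$ the restrictions $x\cap S_m$ and $x'\cap S_m$ are bounded neighbors (bounding the term by $\alpha\cdot 2\lambda$ via the bounded guarantee), while for a set containing \emph{exactly one} of $r,r'$ they are unbounded neighbors (bounding the term by $\alpha\cdot\lambda$ via the unbounded guarantee). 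Writing $a$ for the number of sets containing both, and using that each of $r$ and $r'$ lies in at most $z$ sets (so at most $2z-2a$ sets contain exactly one of them), the surviving terms sum to at most $2a\lambda + (2z-2a)\lambda = 2z\lambda$, independent of $a$. Applying this with $z = s = \Delta(g_i)$ and $\lambda = \rho_i/s$ shows the $i$th iteration of the outer loop on line~\ref{line:pop-group-level-loop} of Algorithm~\ref{alg:safetab-main-algorithm} is bounded $2\rho_i$-zCDP, and the verbatim bounded sequential composition over the $\omega$ levels yields bounded $2\rho_{total}$-zCDP, as claimed.
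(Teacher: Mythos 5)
Your proposal is correct, and it reaches the same bound by a noticeably more modular route than the paper. The paper's proof works directly with R\'enyi divergences: it enumerates four cases for \textsc{TabulatePopulationGroup} according to whether the population group is in TotalOnly and whether the swapped-in and swapped-out records both lie in the group, computes the divergence in each case (e.g.\ $\alpha\,2(1-\gamma)\rho$ when both records share a non-TotalOnly group, $\alpha\rho$ when only one does), and then argues for the outer loop that the worst configuration is the one with no overlap, giving $s\cdot\alpha\,2\rho_i/s = \alpha\,2\rho_i$. You instead factor the entire argument into two reusable pieces: the observation that every query fed to \textsc{NoisyCount} is a nonnegative histogram query, so its bounded $L_2$ sensitivity satisfies $\Delta_b^2 \le 2\Delta_u^2$ and every unbounded guarantee doubles; and a bounded analog of Lemma~\ref{lem:generalized-parallel-composition-zcdp} that splits the index sets into those containing both swapped records (bounded neighbors, cost $2\lambda$ each) and those containing exactly one (unbounded neighbors, cost $\lambda$ each), with the identity $2a\lambda + (2z-2a)\lambda = 2z\lambda$ covering every overlap configuration uniformly. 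What your version buys is generality and a cleaner treatment of the mixed case $0 < a < z$, which the paper handles only by comparing the two extreme configurations and asserting the non-overlapping one dominates; what the paper's version buys is slightly sharper intermediate constants (the $2(1-\gamma)\rho$ in its Case~3, versus your uniform $2\rho$ from the sensitivity bound), though these do not improve the final $2\rho_{total}$. Both arguments rely on the same underlying facts --- Lemma~\ref{lem:discrete-gaussian-satisfies-bounded-zcdp}, Lemma~\ref{lem:joint-divergence}, and neighbor-agnostic sequential composition --- so I consider your proof a valid, somewhat more systematic reorganization of the paper's.
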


\begin{proof}

The proof of Theorem~\ref{cor:safetab-satisfies-bounded-zcdp} follows by tracking privacy loss across the joint mechanisms along with the fact that the \textsc{NoisyCount} procedure satisfies bounded zCDP, per Lemma~\ref{lem:discrete-gaussian-satisfies-bounded-zcdp}, with respect to its inputs. In other words, we can think of \textsc{NoisyCount} as a basic building block from which we construct SafeTab-P.  Under bounded zCDP, it is sufficient to bound the changes due to the removal of one record \textit{and} addition of another. As a result, we reason specifically about the effects of adding one record and removing another record.

First, we construct the \textproc{TabulatePopulationGroup} procedure. That is, we claim that the procedure \textproc{TabulatePopulationGroup} in Algorithm~\ref{alg:safetab-tabulate-pop-group} satisfies bounded $2\rho$-zCDP with respect to the input dataframe, where $\rho$ is the privacy parameter input to the procedure.

Note that \textproc{TabulatePopulationGroup} actually uses one of two algorithms, depending on whether the population group is in the set TotalOnly. Likewise, we also need to consider if both the added and deleted record belong to the same population group or different population groups. For each of these cases, we bound the R\'enyi divergence of the mechanism's output distribution for bounded neighboring databases and demonstrate under which conditions it is maximized.

\textbf{Case 1:} Assume $P \in $ TotalOnly and both records are in $P$. In this case, the procedure simply calls \textsc{NoisyCount}. Since a record is both added and removed from this count, the distribution of this count does not change and as such the R\'enyi divergence is $0$.

\textbf{Case 2:} Assume $P \in $ TotalOnly and either the added record or removed record is in $P$ but not both. In this case, the procedure simply calls \textsc{NoisyCount}. This count changes by $1$ and, by Lemma~\ref{lem:discrete-gaussian-satisfies-bounded-zcdp}, has a maximum R\'enyi divergence of $\alpha \rho$.

\textbf{Case 3:} Assume $P \not \in $ TotalOnly and both records are in $P$. In this case, the procedure can be decomposed into two parts. First, we call \textsc{NoisyCount} with a budget of $\epsfrac\rho$. Then, we use the result to do a GroupBy on the data by sex and age, and we make a call to \textsc{NoisyCount} with a budget of $(1-\epsfrac)\rho$ to compute the counts for all sex by age categories.
The first output of \textsc{NoisyCount} has a R\'enyi divergence of $0$ since the count does not change when both records are in the same population group. 
By Lemma~\ref{lem:joint-divergence}, the R\'enyi divergence of the combined outputs of \textsc{NoisyCount} is bounded by the R\'enyi divergence of the first output of \textsc{NoisyCount} plus the R\'enyi divergence of the second output of \textsc{NoisyCount} when conditioned on the result of the first call that maximizes the total R\'enyi divergence. This bound holds for any of the sex by age marginals. 
In this case, one category increases by $1$ and the other decreases by $1$. Therefore, the outputs of \textsc{NoisyCount} have a R\'enyi divergence of $\alpha 2(1-\epsfrac)\rho$ by Lemma~\ref{lem:discrete-gaussian-satisfies-bounded-zcdp}. Then, by  Lemma~\ref{lem:joint-divergence}, the (adaptive) composition of the two parts has a total divergence of $\alpha 2(1-\epsfrac)\rho$.

\textbf{Case 4:} Assume $P \not \in $ TotalOnly and either the added record or removed record is in $P$ but not both. In this case, the procedure can be decomposed into two parts. First, we call \textsc{NoisyCount} with a budget of $\epsfrac\rho$. Then, we use the result to do a GroupBy on the data by sex and age, and make a call to \textsc{NoisyCount}, over the vector of all sex by age categories, with a budget of $(1-\epsfrac)\rho$. The first output of \textsc{NoisyCount} has a R\'enyi divergence of  $\alpha \epsfrac\rho$ by Lemma~\ref{lem:discrete-gaussian-satisfies-bounded-zcdp} since the count changes by $1$ when only one of the records is in the population group. 
By Lemma~\ref{lem:joint-divergence}, the R\'enyi divergence of the combined outputs of \textsc{NoisyCount} is bounded by the R\'enyi divergence of the first output of \textsc{NoisyCount} plus the R\'enyi divergence of the second output of \textsc{NoisyCount} when conditioned on result of the first call that maximizes the total R\'enyi divergence. 
In this case, one sex by age category either increases by $1$ or decreases by $1$. Therefore, the outputs of \textsc{NoisyCount} have a R\'enyi divergence of $\alpha (1-\epsfrac)\rho$ by Lemma~\ref{lem:discrete-gaussian-satisfies-bounded-zcdp}. Then, by Lemma~\ref{lem:joint-divergence}, the (adaptive) composition of the two parts has total R\'enyi divergence of $\alpha\rho$.

Next, we claim that the $i$th loop of the \textbf{for} loop on line~\ref{line:pop-group-level-loop} of Algorithm~\ref{alg:safetab-main-algorithm} satisfies $2\rho_i$-zCDP. By the definition of $s$, any particular record can appear in the input ($df_P$) of at most $s$ calls to \textsc{TabulatePopulationGroup}. For each of the $s$ calls for the added record, the removed record can either overlap in their population group or be part of a different population group.
When they overlap (Cases 1,3), the maximum R\'enyi divergence is of $\alpha 2(1-\epsfrac)\rho_i/s$. This occurs when the population group is not in TotalOnly (Case 3). When they do not overlap (Cases 2,4), the R\'enyi divergence is the same for population groups in TotalOnly and not in TotalOnly. For these, each output of  \textsc{TabulatePopulationGroup} has a R\'enyi divergence of $\alpha \rho_i/s$. However, since both the added and removed records are part of different population groups, the output of \textsc{TabulatePopulationGroup} for each of those population groups has a R\'enyi divergence of $\alpha \rho_i/s$ resulting in a combined R\'enyi divergence of $\alpha 2 \rho_i/s$ by Lemma~\ref{lem:joint-divergence}. Since $\alpha 2 \rho_i/s > \alpha 2(1-\epsfrac)\rho_i/s$, the maximum R\'enyi divergence occurs when none of the population group for either the added or removed record overlap.
Since this can happen at most $s$ times the total R\'enyi divergence is $s(\alpha 2 \rho_i/s)=\alpha 2 \rho_i$ by Lemma~\ref{lem:joint-divergence}.
This is the case that maximizes the R\'enyi divergence over the entire \textbf{for} loop and, therefore it satisfies bounded $2\rho_i$-zCDP.

Finally, the overall algorithm satisfies $(\sum_{i=1}^{\omega} 2\rho_i)$-zCDP by Lemma~\ref{lem:sequential-composition-zcdp}.

\end{proof}

\section{Implementation of SafeTab-P}
\label{sec:implementation}
The algorithm presented in Section \ref{sec:algorithm-description} is a simplified version of the implemented SafeTab-P program. In this section, we describe some of the differences between the implementation and the simplified algorithm. We focus on differences that could affect the privacy calculus and describe why the implementation is equivalent to the simplified algorithm.

\subsection{Input Validation}\label{sec:validation}
Input validation is an important step before deploying a differentially private algorithm. SafeTab-P performs extensive validation of its input data to ensure that provided data are in the expected formats and internally consistent. Some validation occurs on data that are considered public (like the list of all geographic entities for which data is to be tabulated) and therefore, do not affect the privacy guarantees. However, we also validate the private input files that contain the data of individual census respondents. 

Validation failures are only visible to the trusted curator running the program, and on failure, no part of the differentially private program is run. Validation failures are made available to the trusted curator, so they can correct any errors in the provided input files before executing the differentially private program. Failures in validation are not released publicly and therefore, do not contribute to any privacy loss.

\subsection{Tumult Analytics}\label{sec:analytics}
Rather than directly calculating stability and sampling from noise distributions, SafeTab-P is implemented using Tumult Analytics\cite{berghel2022tumult}, a framework for implementing differentially private queries. 

A key benefit of using Tumult Analytics is that all access to the sensitive data is mediated through a Tumult Analytics \texttt{session}. The \texttt{session} tracks all the transformations and measurements performed on the sensitive data and is able to correctly compute the total privacy loss of the computation on the sensitive data. In SafeTab-P, we construct an Analytics \texttt{session} with:
\begin{itemize}
    \item The total privacy-loss budget for the pipeline (calculated as the sum of all population-group budgets $\rho_{i}$).
    \item The private dataset.
    \item The public datasets (information on all race and ethnicity codes, characteristic iterations, and geographic entities).
    \item The neighboring definition (privacy is with respect to the addition/removal of one record from the private dataset).
    \item The privacy definition to be satisfied (e.g., zCDP).
\end{itemize}
We then implement all data transformations (like mapping a record to its characteristic iterations) and queries within the framework. Tumult Analytics tracks the stability throughout the transformations and applies an appropriate amount of noise to the final queries, guaranteeing that the outputs are differentially private and no more than the total budget is expended.

The use of Tumult Analytics means that, while the simplified algorithm above only describes $\rho$-zCDP, SafeTab-P also supports "pure" $\epsilon$-Differential Privacy with noise drawn from a double-sided geometric distribution. The user-provided privacy-loss budget values are interpreted as either $\epsilon$'s or $\rho$'s depending on which privacy definition the user selects in SafeTab-P's configuration. The production of Detailed DHC-A utilizes zCDP. Thus, the presentation of the algorithm in this document focus on zCDP.

The use of Tumult Analytics also allows (and necessitates) some other deviations from the simplified algorithm. Rather than counting each population group sequentially, in a for-loop, we use Analytics' \texttt{groupby} feature to tabulate many population groups at once. Analytics requires users of the \texttt{groupby} feature to specify all groups to tabulate in advance in the form of a \texttt{KeySet} object. In the case of SafeTab-P, we construct \texttt{KeySets} containing the combinations of possible geographic areas, characteristic iterations, and (if applicable) sex values and age buckets. We build these \texttt{KeySets} using separate input specification files (rather than relying on observed groups present in the CEF). \texttt{KeySets} do not use the confidential CEF data to ensure that noisy statistics are produced for every valid population group. Thus, we do not reveal whether population groups are empty via their presence or absence in the output data. We note the simplified algorithm also allowed for the possibility that $df_P$ (filtered dataframe that only contain records associated with population group $P$) is empty.

A key innovation in SafeTab-P is that statistics are released adaptively based on the total count of each population group. This is a challenging feature to implement in a manner that provably ensures differential privacy -- the privacy accounting across population groups that only output a Total count versus groups that output sex by age counts does not follow from adding up the privacy losses for each group (as in simple composition) but follows from a more nuanced generalized parallel composition (see Lemma~\ref{lem:generalized-parallel-composition-zcdp}). To ensure that we receive this tighter accounting, we use the Tumult Analytics \texttt{partition\_and\_create} operator to make the structure of the computation legible to the automatic privacy accountant.

\subsection{Postprocessing for Addressing Demographic Reasonableness Concerns}\label{sec:postprocessing}
After the differentially private algorithm has completed, we perform several additional postprocessing steps. Because these steps are purely postprocessing, they cannot affect the differential privacy guarantees per Lemma \ref{lem:post-processing}. These postprocessing steps are designed to address specific data quality consistency concerns that arise when adding noise to tabular statistics. All postprocessing was implemented under the direction of subject-matter experts at the Census Bureau. These steps do not exhaustively address all possible demographic reasonableness concerns. 

\subsubsection{Marginals}
A population group that is not in the $TotalOnly$ set can receive either a total count or a sex by age breakdown of varying granularity. For those that receive a sex by age breakdown, we calculate internally-consistent sex marginals (counts broken down by sex, but not age) and a total by summing the relevant noisy counts. This approach provides consistency within a sex by age table for a given population group but does nothing to provide consistency across different population groups.

\subsubsection{Suppression}\label{sec:output-suppression}
One demographic reasonableness concern associated with generating noisy counts is that of creating positive population group counts for groups that do not exist in the true data. We suppress population groups that received small noisy counts to combat this concern. The goal of the suppression is to reduce the probability that a true zero count is released as a positive noisy count. Beyond accomplishing that goal, suppression has two other consequences:
\begin{itemize}
  \item Groups with a true count larger than zero may be suppressed, including true counts larger than the suppression threshold.
  \item Suppression introduces a positive bias in released counts, especially for groups whose true count is at or below the suppression threshold. The bias varies depending on the magnitude of the true count relative to the suppression threshold with smaller counts exhibiting greater amounts of bias, as shown in Figure \ref{fig:bias}.
\end{itemize}

When analyzing suppression, we consider only the $\rho$-zCDP version of SafeTab-P as described in Section \ref{sec:algorithm-description} (which uses discrete Gaussian noise).

\paragraph{Suppression algorithm description.}

\begin{itemize}
  \item Run SafeTab-P to produce noisy counts.
  \item For every sub-state population group receiving only a total count, filter out those population groups whose noisy count is less than the suppression threshold $T$ (we consider a single threshold here for simplicity, but there may be different thresholds for different classes of population groups such as separate thresholds for detailed and regional population groups).
  \item Release the remaining noisy counts.
\end{itemize}

We only perform suppression on sub-state population groups that do not receive a sex by age breakdown since eligibility for sex by age requires passing a population threshold in the main SafeTab-P algorithm that is presumably higher than the postprocessing suppression threshold. Hence, population groups that do receive sex by age breakdowns are unlikely to have a true count of zero.

\paragraph{Computing the probability of zero count suppression.}

Rather than directly choosing a suppression threshold, $T$, we allow the SafeTab-P user to select a probability that a zero count is suppressed, $p$.
We then derive a threshold $T$ from $p$.
In the following, we let $invcdf(\sigma, x)$ denote the inverse cumulative distribution function (CDF) of the discrete Gaussian distribution \cite{CanonneK2020} with mean 0 and scale $\sigma$. As alluded to earlier, suppression is only applied at sub-state geography levels and thus, thresholds are only required for population groups that undergo the two stage adaptive process. The thresholds are set as 

\begin{equation}
  T = invcdf\left(\sqrt{\frac{9}{2(1-\gamma)\rho}}, p\right)
\end{equation}

\noindent where $\rho$ is the privacy-loss budget allocated to the population group level and $\gamma$ is the fraction of the budget reserved for the adaptive step and 9 is the stability value $\Delta(g_i)$ regardless of the index $i$.
In Table~\ref{tbl:target-thresholds}, we give threshold values for various values of $\rho$ for a 99.99\% target zero count suppression rate.

\begin{table}[h]
  \centering
  \begin{tabular}[h]{c c}
    \toprule
    Privacy loss ($\rho$) & Threshold (T) \\
    \midrule
    0.008 & 93 \\
    0.159 & 21 \\
    0.543 & 11  \\
    \bottomrule
  \end{tabular}
  \caption{Suppression thresholds for various values of $\rho$ to give a 99.99\% chance that a true zero count will be suppressed. We set $\gamma = 0.1$ in all cases.}\label{tbl:target-thresholds}
\end{table}

\paragraph{Computing the probability of nonzero count suppression.}

One side effect of suppressing true zeros (i.e., groups whose true count is zero) with high probability is that small population groups will also be suppressed.
In Figure~\ref{fig:suppression-probability}, we plot the probability that a population group will be suppressed, given that true zeros are suppressed with 99.99\% probability.
We plot the true count of the population group as a fraction of the suppression threshold $T$ and therefore, this plot gives the correct probabilities regardless of the privacy loss (as long as $T$ represents the threshold for which zero counts are suppressed with probability 99.99\%.)

\begin{figure}[H]
  \centering
  \includegraphics[width=0.7\textwidth]{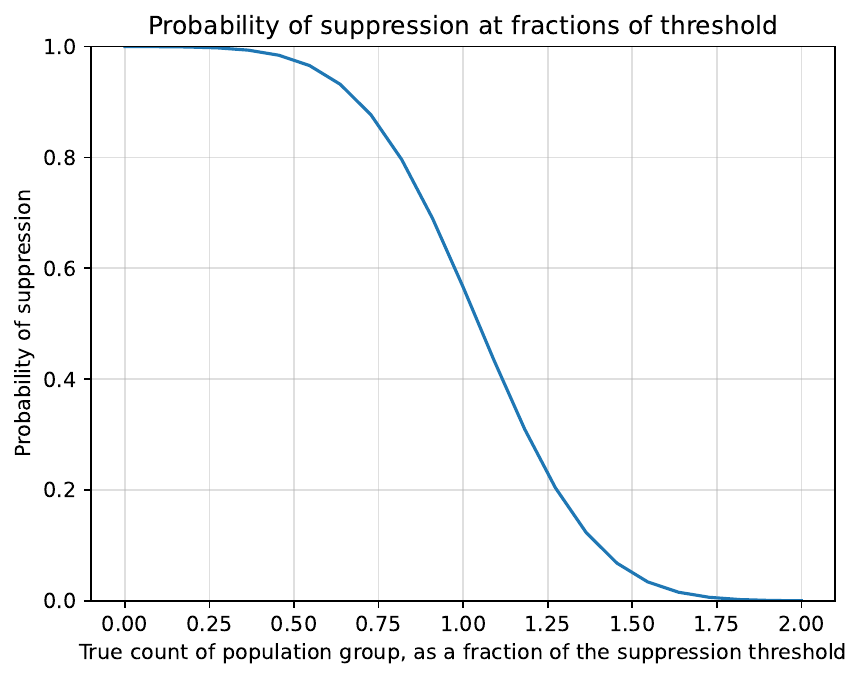}
  \caption{The probability of suppression as a function of the true size of the population group, when true zeros are suppressed with probability 99.99\%. The true size of the population group is expressed as a fraction of the suppression threshold. For example, the ``0.5'' indicates a population group with true size $T/2$, where $T$ is the suppression threshold.}\label{fig:suppression-probability}
\end{figure}

\paragraph{Bias in released counts.}

Another side effect of suppression is that released counts have a positive bias.
The cause is easy to understand for population groups whose true count is below the suppression threshold: noisy estimates for these counts will only be released if the algorithm adds sufficient positive noise.
However, noisy estimates for population groups whose true count is larger than the suppression threshold also have a positive bias.
These groups are suppressed if we add sufficient negative noise.

We can calculate the bias by calculating the expected noise added to a population group, given the population group is published.
If we let $n$ be the true size of the population group, we can use the following formula to calculate the expected noise:

\begin{equation}
  \label{eq:2}
  \mathbb{E}[X | n + X > T] = \frac{\sigma \cdot \phi(\frac{T-n}{\sigma})}{1 - \Phi(\frac{T - n}{\sigma})},
\end{equation}

\noindent where $\phi(\cdot)$ is the probability density function of the standard discrete Gaussian distribution, $\Phi(\cdot)$ is the cumulative distribution function of the standard discrete Gaussian distribution, $X$ is a random variable representing the amount of noise added to the population group count, and $\sigma$ is the scale of the noise.

In Figure~\ref{fig:bias}, we plot the expected noise added to a population group count, given the population group is released.
As in the last section, we give the true size of the population group and the bias as fractions of the suppression threshold.
This means the results hold regardless of the scale of the noise, as long as the suppression threshold, $T$, represents the threshold for which zero counts are suppressed with probability 99.99\%.

\begin{figure}[H]
  \centering
  \includegraphics[width=0.7\textwidth]{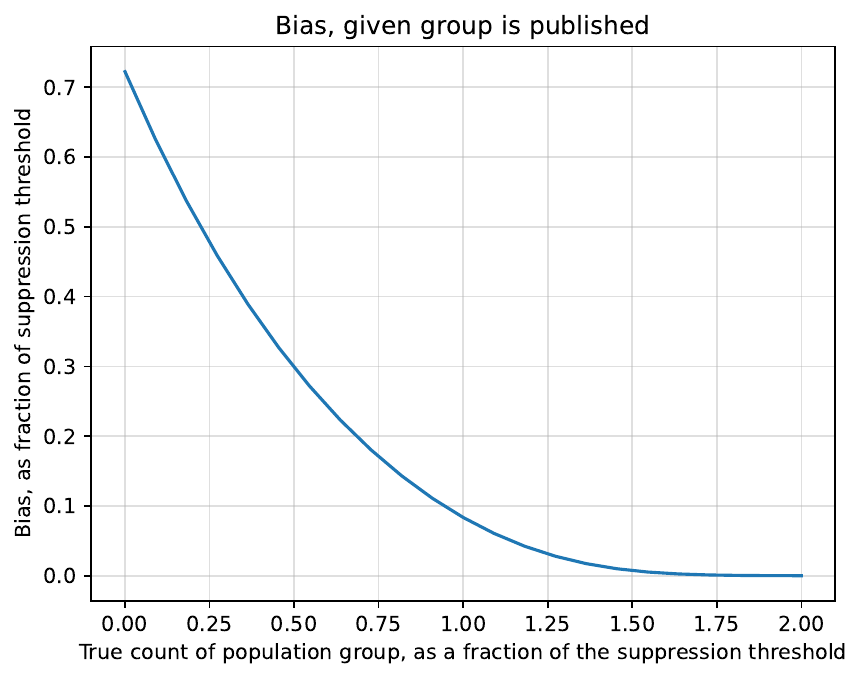}
  \caption{The bias in reported population group size compared to the true size of the population group, when true zeros are suppressed with probability 99.99\%. Both the size of the population group and the bias are expressed as fractions of the suppression threshold. For example, the ``0.5'' on the x-axis indicates a population group with size $T/2$, where $T$ is the suppression threshold and ``0.5'' on the y-axis indicates a bias of $T/2$.}\label{fig:bias}
\end{figure}

\subsubsection{Coterminous Geographies}
Sometimes, two or more geographic entities in different geographic summary levels share the same geographic boundaries (i.e., they are aggregated from identical collections of Census blocks). For example, Washington D.C. is tabulated as a state, county, and place. These geographic entities are called \emph{coterminous}. Another coterminous example is a county containing a single Census tract. A characteristic iteration receives different independent noisy measurements for each geographic summary level of a given coterminous area. However, its counts should be identical at each summary level. It is also possible that a characteristic iteration is suppressed at one summary level of a coterminous area but not suppressed in the other summary levels. Some geographic entities at different summary levels that do not share the same geographic boundaries should still be statistically equivalent. For example, if a county contains one water-only tract and one nonwater tract, characteristic iterations should have the same counts in the nonwater tract as in the county. We abuse terminology by calling statistically equivalent geographic areas coterminous. Working in consultation with subject-matter experts at the Census Bureau, we created a postprocessing step (implemented as a standalone program) that corrects inconsistencies in coterminous geographic entities. We assume as input a hierarchy over geographic summary levels. For each set of coterminous population groups (e.g., \{(D.C. (as a State), Maori Alone), (D.C. (as a County), Maori Alone), (D.C. (as a Place), Maori Alone)\}), we find the population group with the hierarchically highest geographic summary level that has not been suppressed, called the \emph{donor} population group. We then overwrite the data of the other population groups in the set with the data of the donor population group. A pseudocode description of this procedure is presented in Algorithm \ref{alg:coterminous}.

\subsubsection{Tabulation System Suppression}
Additional demographic reasonableness corrections are addressed outside the DAS. In particular, the Decennial Tabulation System also performs suppression postprocessing. Again, per Lemma \ref{lem:post-processing}, this does not impact the privacy analysis. The suppression conducted by non-DAS systems is out of scope for this paper but includes further enforcement of nonnegativity as well as suppression in cases where noisy counts for Alone characteristic iterations appear to be greater than the corresponding noisy counts for Alone or in Any Combination characteristic iterations.

\begin{algorithm}[t]
\caption{\label{alg:coterminous} The coterminous geographies postprocessing algorithm.}
\begin{algorithmic}[1]
\Require $df$: The output of the main SafeTab-P algorithm
\Require $\mathcal{C}$: A list of all sets of coterminous population groups.
\Require $R$: An ordering of geographic summary levels.

\Procedure{Coterminous geographies}{$df$, $\mathcal{C}$, $R$}

\For{$C \in \mathcal{C}$}
\For{$r \in R$}
\State {$P \leftarrow \text{The population group in } C \text{ whose geographic entity is at summary level } r$.}
\If{$df$ contains data for $P$}
\State \text{Let $M$ be the set of counts in $df$ for $P$}
\For{$P' \in C - \{P\}$}
\State \text{Remove all counts for $P'$ from $df$}
\State \text{Add $M$ to $df$ for $P'$.}
\EndFor
\EndIf
\EndFor
\EndFor
\EndProcedure
\end{algorithmic}
\end{algorithm}

\subsection{Other Implementation Details}\label{sec:complication}
We note a few other implementation differences that are primarily driven by the specification requirements of the system and data wrangling aspects of the code. These include the function mapping race/ethnicity codes to characteristic iterations, rules for what statistics are tabulated for different population groups, and the handling of Puerto Rico. 

\subsubsection{Mapping Race/Ethnicity Codes to Characteristic Iterations}
The pseudocode in Section \ref{sec:algorithm-description} abstracts the process of mapping a person's input record into their corresponding population group as a function $g_{i}$. In practice, this process requires joining against several specification input files and some subtle logic (to determine whether a user qualifies for an Alone characteristic iteration in addition to an Alone or in Any Combination characteristic iteration). However, the end result is functionally equivalent to the $g_{i}$ abstraction - each record is mapped to a number of geographic entities and characteristic iterations. The stability factor of the implemented transformations, the equivalent of $\Delta(g_i)$, is automatically tracked by Analytics rather than being computed by hand. 

The pseudocode also ignores the logic associated with pre-processing a specified universe of geographic entities and iteration codes into population group levels $\mathcal{P}_i$. That is, the master list of all population groups divided into population group levels is constructed through a combination of specification files rather than being handed directly to the system.

\subsubsection{Puerto Rico}\label{sec:pr}
The SafeTab-P algorithm presented in Algorithm \ref{alg:safetab-main-algorithm} describes an input dataframe consisting of records of every person in the United States. However, the same algorithm is applied to a dataframe consisting of records from Puerto Rico. In implementation, SafeTab-P tabulates data for the United States and Puerto Rico in two separate passes.

\section{Parameters and Tuning}\label{sec:params}

Between the pseudocode representation of SafeTab-P and the selected implementation details presented earlier, we have alluded to a number of parameters that must be set before executing a run of the SafeTab-P program. Parameters are adjustable factors that must be fixed to fully define the nature of the program (e.g., the noise distribution employed in \textsc{NoisyCount}, the privacy-loss budgets for population group levels, and population thresholds). With regard to SafeTab-P specifically (but any differentially private algorithm generally), parameter selection is a matter of policy. The Census Bureau's Data Stewardship Executive Policy (DSEP) committee, in consultation with subject-matter experts as well as internal and external privacy experts, approved all available parameters for the Detailed DHC-A. Parameter selection necessitates trade-offs, as many of these parameters are dependent on each other. To illustrate, we consider a fundamental relationship between the privacy loss parameters and their corresponding margins of error with discrete Gaussian noise distributions.

\subsection{Error bounds}\label{sec:discrete-gauss-error-bound}
We derive error bounds for Algorithm \ref{alg:safetab-main-algorithm} with discrete Gaussian noise. We begin by stating a portion of Proposition 25 from \cite{CanonneK2020}.

\begin{proposition}[Proposition 25 in \cite{CanonneK2020}]\label{lem:discrete-gaussian-bound}

For all $m \in \mathbb{Z}$ with $m \geq 1$ and for all $\sigma \in \mathbb{R}$ with $\sigma > 0$, $\Pr[X \geq m]_{X \leftarrow \mathcal{N}_{\mathbb{Z}}(\sigma^2)} \leq \Pr[X \geq m-1]_{X \leftarrow \mathcal{N}(\sigma^2)}$.
\end{proposition}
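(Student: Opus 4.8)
The plan is to write both tail probabilities explicitly and reduce the inequality to two independent comparisons: one for the unnormalized tail mass and one for the normalizing constants. Write $g(x) = e^{-x^2/(2\sigma^2)}$, let $Z = \sum_{y \in \mathbb{Z}} g(y)$ be the discrete Gaussian normalizer, and recall that the continuous normalizer is $\int_{-\infty}^{\infty} g(x)\,dx = \sqrt{2\pi}\,\sigma$. Then the claim becomes a comparison between
\[
\Pr[X \geq m]_{\text{disc}} = \frac{1}{Z}\sum_{k=m}^{\infty} g(k) \qquad\text{and}\qquad \Pr[X \geq m-1]_{\text{cont}} = \frac{1}{\sqrt{2\pi}\,\sigma}\int_{m-1}^{\infty} g(x)\,dx.
\]
The strategy is to prove (i) $\sum_{k=m}^{\infty} g(k) \le \int_{m-1}^{\infty} g(x)\,dx$ and (ii) $Z \ge \sqrt{2\pi}\,\sigma$. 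Chaining these gives the result: (i) bounds the numerator after shifting the lower limit from $m$ to $m-1$, and (ii) only makes the left-hand fraction smaller, since it shows $1/Z \le 1/(\sqrt{2\pi}\,\sigma)$.

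For (i), I would use monotonicity. The hypothesis $m \ge 1$ guarantees $m-1 \ge 0$, so $g$ is nonincreasing on $[m-1, \infty)$. Hence for every integer $k \ge m$ and every $x \in [k-1, k]$ we have $g(k) \le g(x)$, whence $g(k) \le \int_{k-1}^{k} g(x)\,dx$. Summing over $k \ge m$ telescopes the intervals into $[m-1,\infty)$ and yields (i). This step is routine; the role of the assumption $m \ge 1$ is precisely to keep the comparison interval inside the region where $g$ decreases.

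The main obstacle is (ii), the comparison of normalizing constants, which is where the integer-versus-real gap really lives. A naive midpoint or Riemann comparison does not obviously work, because $g$ is concave near $0$ but convex in the tails, so its behaviour on unit intervals is not uniformly one-sided. The clean route is Poisson summation: since the Fourier transform of $g$ is $\hat{g}(\xi) = \sqrt{2\pi}\,\sigma\, e^{-2\pi^2 \sigma^2 \xi^2}$, Poisson summation gives
\[
Z = \sum_{n \in \mathbb{Z}} g(n) = \sum_{k \in \mathbb{Z}} \hat{g}(k) = \sqrt{2\pi}\,\sigma \sum_{k \in \mathbb{Z}} e^{-2\pi^2 \sigma^2 k^2}.
\]
Every summand on the right is positive and the $k=0$ term equals $1$, so $\sum_{k \in \mathbb{Z}} e^{-2\pi^2 \sigma^2 k^2} \ge 1$, which is exactly (ii). The work here is justifying the use of Poisson summation (the Gaussian is a Schwartz function, so convergence and the interchange are unproblematic) and recalling the Gaussian's self-dual Fourier transform. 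With (i) and (ii) in hand, the chain $\tfrac{1}{Z}\sum_{k\ge m} g(k) \le \tfrac{1}{Z}\int_{m-1}^{\infty} g(x)\,dx \le \tfrac{1}{\sqrt{2\pi}\,\sigma}\int_{m-1}^{\infty} g(x)\,dx$ closes the proof immediately.
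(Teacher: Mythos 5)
Your proof is correct. The paper does not prove this statement itself---it imports it verbatim as Proposition 25 of \cite{CanonneK2020}---and your argument (a monotone sum-to-integral comparison for the unnormalized tail, plus Poisson summation to show $\sum_{n\in\mathbb{Z}}e^{-n^2/(2\sigma^2)}\ge\sqrt{2\pi}\,\sigma$ for the normalizers) is essentially the same route taken in that reference, including the correct observation that a naive Riemann comparison of the normalizing constants only yields $Z+1\ge\sqrt{2\pi}\,\sigma$ and so does not suffice.
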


The following corollary is immediate. 
\begin{corollary}
For all $x, \sigma \in \mathbb{R}$ with $x \geq 1$ and $\sigma > 0$,  $\Pr[X > x]_{X \leftarrow \mathcal{N}_{\mathbb{Z}}(\sigma^2)} \leq \Pr[X > \lfloor x \rfloor]_{X \leftarrow \mathcal{N}(\sigma^2)}$.
\end{corollary}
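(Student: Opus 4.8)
The plan is to reduce the corollary directly to Proposition~\ref{lem:discrete-gaussian-bound} by rewriting the strict inequalities appearing in the corollary as the non-strict inequalities used in the proposition, exploiting the fact that the discrete Gaussian is supported on $\mathbb{Z}$ while the continuous Gaussian is atomless.

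First I would rewrite the left-hand side. Since $X \leftarrow \mathcal{N}_{\mathbb{Z}}(\sigma^2)$ is integer-valued, the event $\{X > x\}$ coincides with $\{X \geq \lfloor x \rfloor + 1\}$ for every real $x$: the smallest integer strictly exceeding $x$ is $\lfloor x \rfloor + 1$ whether or not $x$ is itself an integer. Hence $\Pr[X > x]_{X \leftarrow \mathcal{N}_{\mathbb{Z}}(\sigma^2)} = \Pr[X \geq \lfloor x \rfloor + 1]_{X \leftarrow \mathcal{N}_{\mathbb{Z}}(\sigma^2)}$. Now set $m = \lfloor x \rfloor + 1$. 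The hypothesis $x \geq 1$ gives $\lfloor x \rfloor \geq 1$, so $m$ is an integer with $m \geq 2 \geq 1$, which satisfies the premise of Proposition~\ref{lem:discrete-gaussian-bound}. Applying the proposition yields $\Pr[X \geq m]_{X \leftarrow \mathcal{N}_{\mathbb{Z}}(\sigma^2)} \leq \Pr[X \geq m-1]_{X \leftarrow \mathcal{N}(\sigma^2)}$, that is, $\Pr[X \geq \lfloor x \rfloor + 1]_{X \leftarrow \mathcal{N}_{\mathbb{Z}}(\sigma^2)} \leq \Pr[X \geq \lfloor x \rfloor]_{X \leftarrow \mathcal{N}(\sigma^2)}$, since $m - 1 = \lfloor x \rfloor$.

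Finally I would clean up the right-hand side. Because $\mathcal{N}(\sigma^2)$ is absolutely continuous, every single point carries probability zero, so $\Pr[X \geq \lfloor x \rfloor]_{X \leftarrow \mathcal{N}(\sigma^2)} = \Pr[X > \lfloor x \rfloor]_{X \leftarrow \mathcal{N}(\sigma^2)}$. Chaining the equality, the proposition's inequality, and this last equality produces exactly the claimed bound.

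I do not expect a genuine obstacle: the entire argument is the observation that $X > x$ for an integer-valued variable is the same as $X \geq \lfloor x \rfloor + 1$. The only point requiring care is the bookkeeping of the floor adjustment and the strict-versus-non-strict conversions, ensuring that the integer $m$ is chosen as $\lfloor x \rfloor + 1$ so that $m - 1 = \lfloor x \rfloor$ aligns with the floor appearing in the target statement, and that the continuous atomlessness is invoked on the correct side.
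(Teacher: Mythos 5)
Your argument is correct and is precisely the routine derivation the paper leaves implicit when it declares the corollary ``immediate'' from Proposition~25: rewrite the integer-valued tail $\{X > x\}$ as $\{X \ge \lfloor x\rfloor + 1\}$, apply the proposition with $m = \lfloor x\rfloor + 1$, and use atomlessness of the continuous Gaussian to pass from $\ge$ to $>$ on the right. Nothing to correct.
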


Figure 2 of \cite{CanonneK2020} provides an intuitive visualization of these tail bounds.
It follows that $X \in [- \lfloor 1.96\sigma \rfloor,  \lfloor 1.96\sigma \rfloor]$ with probability at least 95\%.
That is, the 95\% margin of error (MOE), defined as half the width of the 95\% confidence interval, is given by $\lfloor 1.96\sigma \rfloor$. 

Hence, for a population group in level $i$ in the TotalOnly set, the MOE in the directly computed total estimate from line \ref{line:noisy-total-only} in Algorithm \ref{alg:safetab-tabulate-pop-group}  is $\left \lfloor1.96\sqrt{\frac{s}{2\rho_i}} \right \rfloor$ ($s = \Delta(g_i) = 9$ and $\rho_i$ is the privacy-loss budget for level $i$). For the population groups in level $i$ not in the TotalOnly set, the MOE in a single sex by age group in Algorithm \ref{alg:safetab-tabulate-pop-group} is 
$\left \lfloor1.96\sqrt{\frac{s}{2(1-\gamma)\rho_i}} \right \rfloor$ ($s = \Delta(g_i) = 9$, $\rho_i$ is the privacy-loss budget for level $i$, and $\gamma$ is the fraction of the budget used in Step 1 of the adaptive process).

\begin{corollary}\label{cor:dgauss-rho-from-moe}
The NoisyCount procedure implemented with discrete Gaussian noise and run with $\rho = \frac{1.92}{\lfloor MOE \rfloor^2}$ has a 95\% MOE of at most $MOE$.
\end{corollary}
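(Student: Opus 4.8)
The plan is to reduce the statement to the margin-of-error expression already derived in this section and then invert it. Recall that \textsc{NoisyCount}$(a,\rho)$ (Algorithm~\ref{alg:base-discrete-gaussian}) adds to each coordinate independent noise drawn from $\mathcal{N}_{\mathbb{Z}}\left(\tfrac{1}{2\rho}\right)$, so the noise on a single count has scale $\sigma = \sqrt{\tfrac{1}{2\rho}}$. The discussion preceding the corollary, built on the tail bound of Proposition~\ref{lem:discrete-gaussian-bound} and its immediate corollary, establishes that the $95\%$ margin of error of such a count is $\lfloor 1.96\,\sigma\rfloor$. Thus the entire task is to substitute $\rho = \tfrac{1.92}{\lfloor MOE\rfloor^2}$ into $\lfloor 1.96\,\sigma\rfloor$ and verify that the result is at most $MOE$.

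Carrying out the substitution, I would first write $\sigma = \sqrt{\tfrac{1}{2\rho}} = \sqrt{\tfrac{\lfloor MOE\rfloor^2}{2\cdot 1.92}} = \tfrac{\lfloor MOE\rfloor}{\sqrt{3.84}}$, so that $1.96\,\sigma = \tfrac{1.96}{\sqrt{3.84}}\,\lfloor MOE\rfloor$. The key numerical observation is that the chosen constant is essentially $\tfrac{1.96^2}{2} = 1.9208$, for which $\sqrt{2\cdot 1.9208} = 1.96$ exactly and hence $\tfrac{1.96}{\sqrt{3.8416}} = 1$. With this constant one gets $1.96\,\sigma = \lfloor MOE\rfloor$, and taking the floor yields a $95\%$ MOE of $\lfloor\lfloor MOE\rfloor\rfloor = \lfloor MOE\rfloor \le MOE$, which is exactly the claim.

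The one point needing care, and which I expect to be the main (minor) obstacle, is that the stated constant $1.92$ is the two-decimal rounding of $1.9208 = 1.96^2/2$ and is \emph{slightly smaller}. A smaller $\rho$ means a marginally larger $\sigma$, so in fact $1.96\,\sigma = \tfrac{1.96}{\sqrt{3.84}}\lfloor MOE\rfloor$ exceeds $\lfloor MOE\rfloor$ by the factor $\tfrac{1.96}{\sqrt{3.84}} \approx 1.0002$. I would therefore argue that the floor nonetheless lands on $\lfloor MOE\rfloor$: since $\lfloor 1.96\,\sigma\rfloor$ and $\lfloor MOE\rfloor$ are integers, it suffices that $1.96\,\sigma < \lfloor MOE\rfloor + 1$, i.e. that roughly $0.0002\,\lfloor MOE\rfloor < 1$, which holds comfortably for every margin of error arising in practice. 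Equivalently, reading $1.92$ as the rounded form of $1.9208$ lets the floor absorb the rounding and gives $\lfloor 1.96\,\sigma\rfloor = \lfloor MOE\rfloor$ directly. Combining either route with the trivial bound $\lfloor MOE\rfloor \le MOE$ then yields the corollary.
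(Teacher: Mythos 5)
Your proposal is correct and follows exactly the route the paper intends: the corollary is stated without explicit proof as an immediate inversion of the preceding formula $\mathrm{MOE} = \lfloor 1.96\,\sigma\rfloor$ with $\sigma = \sqrt{1/(2\rho)}$, which is precisely the substitution you carry out. Your additional observation that $1.92$ is a rounding of $1.96^2/2 = 1.9208$, and that the floor absorbs the resulting factor of $\approx 1.0002$ for any $\lfloor MOE\rfloor$ below roughly $4800$, is a point of care the paper itself glosses over.
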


This relationship dictates that adjusting a desired MOE to achieve improved data utility comes at the cost of additional privacy loss. Likewise, adjusting privacy-loss parameters to provide better privacy protection comes at the cost of wider MOE in the estimated statistics.

\subsection{Parameter Identification, Trade-offs, and Outcomes}

Fully identifying the set of possible parameters, let alone navigating the trade-offs associated with them, was no small feat. Before delving into specific parameters, we will overview some of the methods involved in navigating associated trade-offs. Firstly, parameters tend to impact some combination of these three aspects: data accessibility, data accuracy, and privacy. Data accessibility, in a nutshell, refers to the volume of tabular statistics released. Data accuracy is primarily measured by margins of error on the tabular statistics. Privacy is measured by the privacy-loss parameters of the algorithm. For example, excluding population group level $i$ would reduce data accessibility but improve privacy since the privacy loss $\rho_i$ is no longer necessary. To aid in the understanding of these trade-offs, we relied on a combination of tangible tools and theoretical analyses (such as the suppression analysis demonstrated in Section \ref{sec:output-suppression}).  

Before development on SafeTab-P began, we created the SafeTEx tool to provide hands-on experience with some of the trade-offs. We will provide a brief summary of this tool. Then, we will highlight specific parameters and the critical decisions made by the DSEP committee as a consequence of either interacting with the tool, reviewing theoretical analysis, or subject-matter expert guidance.

\subsection{Parameter Tuning using the SafeTEx Tool}
SafeTEx is an easy-to-use interactive decision support tool implemented in the Microsoft Excel program and developed to facilitate conversations between subject-matter experts, disclosure avoidance scientists, and ultimately, the DSEP committee. 

The tool allowed users to interactively specify: 
\begin{itemize}
\item The set of geography levels and characteristic iteration levels that constitute the universe of population groups for which statistics are tabulated. 
\item The maximum number of races a person is associated with to an integer in the range 1-8.
\item A target error threshold.
\item A privacy-loss budget $\rho$ and the fraction of the privacy-loss budget reserved for the Step 1 of the adaptive procedure $\gamma$.
\end{itemize}
Based on these parameters, the SafeTEx tools computed thresholds on the size of population groups at which different statistics levels (total, sex by age(4), sex by age(9), sex by age(23) can be released at the target error.  The computations were performed using analytical formulae for expected error of noise mechanisms employed in the SafeTab-P algorithm. 

We describe in the next section some of the key parameters considered and the decision process used to set these parameters. 

\subsubsection{Parameter Selection}

\noindent \emph{\textbf{Population group levels, TotalOnly population groups, and population thresholds}}: As a reminder, a population group level is defined by a geographic summary level, such as Nation, State, and County, and an iteration level (i.e., Detailed or Regional). The SafeTEx tool helped subject-matter experts grasp the impact of adding or removing levels would have on the privacy-loss budget, but they had to weigh that against the value of having publishable statistics at each given level. They also gathered feedback from the user community on these topics and ultimately settled on the levels referenced in Table \ref{tab:moe-targets}. Furthermore, SafeTEx helped subject-matter experts reason concerning the relation between absolute expect error and population sizes. These relative error comparisons were instrumental in determining population thresholds for the adaptive algorithm. We leave the research details of these decision-making efforts as a topic for a future paper. The subject-matter experts also made a determination as to which groups should be pre-defined as TotalOnly. The SafeTEx tooling did not directly address this concept. 

\vspace{\baselineskip}

\begin{table}[t]
    \centering
    \begin{tabular}{c c c c c c}
    \toprule
    Population Group Level & MOE Target &  
    \multicolumn{2}{c}{Unbounded Privacy Loss} & \multicolumn{2}{c}{Bounded Privacy Loss}
    \\
    \cmidrule(lr{.75em}){3-4}
    \cmidrule(lr{.75em}){5-6}
    & & Step 2 & Total & Step 2 & Total \\
    \midrule
    (Nation, Detailed): $\rho_1$ & 3 & 1.921 & 2.134 & 3.842 & 4.268\\
    (State, Detailed): $\rho_2$ & 3 & 1.921 & 2.134 & 3.842 & 4.268\\
    (County, Detailed): $\rho_3$ & 11 & 0.143 & 0.159 & 0.286 & 0.318 \\
    (Tract, Detailed): $\rho_4$ & 11 & 0.143 & 0.159 & 0.286 & 0.318 \\
    (Place, Detailed): $\rho_5$ & 11 & 0.143 & 0.159 & 0.286 & 0.318 \\
    (AIANNH, Detailed): $\rho_6$ & 11 & 0.143 & 0.159 & 0.286 & 0.318\\
    (Nation, Regional): $\rho_7$ & 50 & 0.007 & 0.008 & 0.014 & 0.016\\
    (State, Regional): $\rho_8$ & 50 & 0.007 & 0.008 & 0.014 &0.016\\
    (County, Regional): $\rho_9$ & 50 & 0.007 & 0.008 & 0.014 &0.016\\
    (Tract, Regional): $\rho_{10}$ & 50 & 0.007 & 0.008 & 0.014 &0.016\\
    (Place, Regional): $\rho_{11}$ & 50 & 0.007 & 0.008 & 0.014 &0.016\\
    \bottomrule
    \end{tabular}
   \caption{MOE targets for the statistics released (in Step 2 of the adaptive algorithm) at different population group levels, along with the corresponding privacy loss (unbounded and bounded $\rho$-zCDP for discrete Gaussian). The privacy loss is reported for the Step 2 (to match the MOE) as well as the total loss for that level. Step 2 loss is 90\% of Total loss at each population group level. Note that the privacy losses reported here have already been aggregated over all the population groups at the given level, so the \emph{Total} column represents the privacy loss input parameters of the SafeTab-P algorithm.}
   \label{tab:moe-targets}
\end{table}

\noindent \emph{\textbf{Noise distribution}}: As mentioned in the implementation details, SafeTab-P can either be instantiated with discrete Gaussian noise or geometric noise. Other noise distributions were not considered because of a desire to preserve integrality in the outputs without needing to postprocess the results to achieve it. We analytically compared the privacy and accuracy of two noise distributions and observed that the discrete Gaussian mechanism displayed roughly 7\% less privacy loss, compared to the geometric mechanism at the same accuracy targets. We analyzed both mechanisms under another alternative to pure differential privacy known as approximate differential privacy to ensure consistency in the assessment of privacy. Details of this analysis can be found in \cite{Haney2021}. As an outcome of this analysis, the Census Bureau decided to use discrete Gaussian noise. Hence, the focus of this paper on that mechanism. 

\vspace{\baselineskip}

\noindent \emph{\textbf{Race Multiplicity}}: The stability $\Delta(g_i)$ of the flatmap transformation $g_i$ mapping individuals to population groups in level $i$ is a significant factor in the noise scale required to satisfy zCDP. The data collection process restricts individuals to a maximum of 8 detailed race codes and 1 ethnicity code, which translates to a flatmap stability of 9 for any given population group level. This is because an individual with 8 unique race codes can be associated with at most 8 Alone or in Any Combination characteristic iterations for a level plus the one ethnic characteristic iteration. Higher stability equates to higher variance noise, all else held equal, so an option to improve the noise variance would be to reduce the stability by setting a lower cap on the number of race codes processed for each individual. For example, if individuals were restricted to 3 race codes instead of 8, the stability would drop from 9 to 4 resulting in a 33\% decrease in MOE when holding the privacy loss constant. However, the restriction would also introduce another form of bias into the statistics and potential artificially reduce the set of population groups with true positive counts. The DSEP committee opted to leave the race multiplicity parameter at 8.

\vspace{\baselineskip}

\noindent \emph{\textbf{MOE, $\rho$, and $\gamma$}}: Recall that $\gamma$ is the fraction of $\rho$ reserved for the Step 1 of the adaptive procedure in SafeTab-P. The SafeTEx tool provided an interface for adjusting $\rho$'s and $\gamma$ to observe the impact on expected MOE as derived in Section \ref{sec:discrete-gauss-error-bound}. The Census Bureau selected $\gamma = 0.1$ and  set $\rho_i$'s as displayed in Table \ref{tab:moe-targets} for the production run of SafeTab-P on the 2020 Census data.

\section{Conclusion}\label{sec:conclusion}

In this article, we presented SafeTab-P, a differentially private algorithm designed for releasing the Detailed DHC-A data product. We explained the adaptive nature of SafeTab-P and proved the algorithm satisfies zCDP. We also looked at selected implementation details, including how the SafeTab-P program was built on the Tumult Analytics platform. We described our contributions to the tuning of parameters for SafeTab-P. In future papers, we will discuss algorithms designed for the release of other 2020 Census data products, such as the Detailed DHC-B and S-DHC.

\bibliographystyle{plain}
\bibliography{refs}

\end{document}